\newcommand{\ignore}[1]{}
\def\opt{\mathsf{OPT}}
\definecolor{bblue}{HTML}{4F81BD}
\definecolor{rred}{HTML}{C0504D}
\definecolor{ggreen}{HTML}{9BBB59}
\definecolor{ppurple}{HTML}{9F4C7C}
\newtheorem{theorem}{Theorem}
\newtheorem{corollary}[theorem]{Corollary}
\newtheorem{lemma}[theorem]{Lemma}
\newtheorem{remark}[theorem]{Remark}
\newenvironment{proof}{ \textbf{Proof:} }{ \hfill $\Box$}
\def\bb0{{\mathbb{0}}}
\def\bb{{\mathbf{b}}}
\def\bp{{\mathbf{p}}}
\def\b0{{\mathbf{0}}}
\def\opt{\mathsf{OPT}}
\def\b1{{\mathbf{1}}}
\def\cA{\mathcal{A}}
\def\cJ{\mathcal{J}}
\def\sf0{{\mathsf{0}}}
\def\nn{\nonumber}
\newlength{\figurewidth}\setlength{\figurewidth}{0.6\columnwidth}
\newcounter{one}
\newcounter{two}
\begin{document}
\title{Speed Scaling with Multiple Servers Under A  Sum Power Constraint}
\numberofauthors{2} 
\author{
\alignauthor
Rahul Vaze\thanks{We acknowledge support of the Department of Atomic Energy, Government of India, under project no. RTI4001}\\
       \affaddr{School of Technology and Computer Science}\\
       \affaddr{Tata Institute of Fundamental Research}\\
       \email{rahul.vaze@gmail.com}
\alignauthor
Jayakrishnan Nair\\
       \affaddr{Department of Electrical Engineering}\\
       \affaddr{IIT Bombay}\\
       \email{jayakrishnan.nair@iitb.ac.in}
}


%
%

\maketitle

\begin{abstract}
The problem of scheduling jobs and choosing their respective speeds with multiple servers under a sum power constraint  to minimize the flow time + energy is considered. This problem is a generalization of the flow time minimization problem with multiple unit-speed servers, when jobs can be parallelized, however,   
with a sub-linear, concave speedup function $k^{1/\alpha}, \alpha>1$ when allocated $k$ servers, i.e., jobs experience diminishing returns from being allocated additional servers. When all jobs are available at time $0$, we show that a very simple algorithm EQUI, that processes all available jobs at the same speed is $\left(2-\frac{1}{\alpha}\right) \frac{2}{\left(1-\left(\frac{1}{\alpha}\right)\right)}$-competitive, while in the general case, when jobs arrive over time, an LCFS based algorithm is shown to have a constant (dependent only on $\alpha$) competitive ratio.
\end{abstract}

\section{Introduction}
Scheduling jobs with multiple servers to minimize the sum of their response times (called the flow time) is an important practical problem, and finding optimal algorithms remains challenging. 
An added feature in modern servers is their ability to work at different speeds. This paradigm is called {\it speed scaling} \cite{bansal2009speed,devanur2017primal, vaze2020multiple, vaze2020network}, where one or more servers with tuneable speed are available, and operating any server at speed $s$ consumes energy at rate $P(s)$, a non-decreasing convex function of $s$. 
With speed scaling, the problem is to choose speed of operation so as to minimize the sum of the flow time and energy. 

In prior work, speed scaling problem with multiple servers has been considered \cite{bansal2009speed,devanur2017primal, vaze2020multiple, vaze2020network}, however, with a fixed number of servers and without any upper bound on the power consumption of any server. 
For a single server, the flow time + energy problem under a power constraint or upper limit on speed has been solved in \cite{bansal2009speed}.
In this paper, we consider the speed scaling problem to minimize the sum of the flow time plus energy with infinite servers under a sum-power constraint across all servers. We refer to this as the {\it flow time + energy} problem. 
Even though there are unlimited number of servers, each job can only be processed by one server at any time. 
A special case of this problem is to minimize just the flow time, called the {\it flow time} problem. 


The flow time problem is also equivalent to the problem of scheduling parallelizable jobs \cite{harchol2021open} with sub-linear speedup (called \textsc{sub-linear-sched} problem) described as follows. 
Let there be $N$ servers with unit speed, and jobs arriving over time with different sizes have to be assigned a set of servers, so as to minimize  the flow time. 
Jobs receive a concave, sub-linear speedup from parallelization:  decreasing marginal benefit from being allocated additional
servers.  In particular, if $k\le N$ is the number of servers assigned to a job, then the resulting speed obtained is $k^{1/\alpha}$ for $\alpha >1$. 
When jobs can be completely parallelizable $\alpha=1$, processing the job with shortest remaining processing time (SRPT) on all servers is known to be optimal.

In this paper, we consider online algorithms (that have only causal job arrival information) for solving the flow time + energy problem. To quantify the performance of an online algorithm, we consider the metric of  competitive ratio, that is defined as the ratio of the flow time of the online algorithm and the optimal offline algorithm $\opt$ maximized over all possible inputs (worst case).

{\bf Prior Work}
The \textsc{sub-linear-sched} problem  has been an object of immense interest \cite{berg2017towards, berg2019hesrpt, im2016competitively, edmonds2000scheduling, edmonds2009scalably, agrawal2016scheduling}, where practical algorithms include packing based \cite{verma2015large}, and resource reservation algorithms \cite{ren2016clairvoyant}. Heuristic policies with only numerical performance analysis  can be found in \cite{lin2018model}.
In past, this problem has been considered for the combinatorial discrete allocation model \cite{im2016competitively}, where an integer number of servers are assigned to any job, as well as the continuous allocation model \cite{edmonds2000scheduling, edmonds2009scalably, agrawal2016scheduling, berg2017towards, berg2019hesrpt}, that treats the $N$ servers as a single resource block which can be partitioned into any size and assigned to any job.

For the discrete allocation model, in \cite{im2016competitively}, a variant of the SRPT algorithm is shown to be $4^{1/(1-1/\alpha)} \log W$ competitive, where $W = w_{\max}/w_{\min}$ is the ratio of the largest and the smallest job size. 
For the continuous allocation model, this problem has been considered in \cite{edmonds2000scheduling, edmonds2009scalably, agrawal2016scheduling, berg2017towards, berg2019hesrpt}, where with resource augmentation, i.e., the online algorithm is allowed more resources, e.g., faster machines, than the $\opt$, algorithms with constant competitive ratios have been derived as a function of the resource augmentation factor.

A special case of the problem in the continuous allocation model has been considered in \cite{berg2019hesrpt} recently, where all jobs arrive together/are available at time $0$. In this simpler setting, \cite{berg2019hesrpt} derived an optimal algorithm, called heSRPT, that gave an explicit expression for the number of servers to be dedicated for each job, which prefers smaller jobs, but unlike SRPT, all jobs are given non-zero speed. In the stochastic setting, where jobs arrive over time that have exponentially distributed job sizes, \cite{berg2017towards} showed that an algorithm called EQUI that dedicates equal number of resources to all outstanding jobs is optimal to minimize the expected flow time.



Our contributions for the flow time problem are as follows.
\begin{itemize}
\item We first consider the setting similar to \cite{berg2019hesrpt}, where all jobs are available at time $0$, and show that the well known algorithm called EQUI, that allocates equal number of resources to all outstanding jobs has competitive ratio of $\left(2-\frac{1}{\alpha}\right) \frac{1}{\left(1-\left(\frac{1}{\alpha}\right)\right)}$. For example, for $\alpha=2$, the bound is at most $3$. 
As $\alpha$ increases, speeds chosen by EQUI and the optimal algorithm (heSRPT \eqref{def:hesrpt} \cite{berg2019hesrpt}) converge, and that is also reflected in the competitive ratio bound that improves as $\alpha$ increases. 
In constrast to heSRPT, the utility of EQUI is that it does not need to know the exact remaining sizes of the jobs, and the speed is identical for all jobs which makes it easy to implement in practice. 
\item For the online setting, where jobs arrive over time, we propose an algorithm called the \textsc{Fractional-LCFS-EQUI} that processes a fraction of the outstanding jobs that have arrived most recently, with equal speed. 
We show that \textsc{Fractional-LCFS-EQUI} has a competitive ratio that depends only on $\alpha$ and not on system parameters such as the total number of jobs, and their respective sizes. In prior work, for similar results, resource augmentation was needed \cite{edmonds2000scheduling, edmonds2009scalably, agrawal2016scheduling}. Thus, our result overcomes a fundamental bottleneck of resource augmentation compared to  \cite{edmonds2000scheduling, edmonds2009scalably, agrawal2016scheduling}, however, it must be noted that  \cite{edmonds2000scheduling, edmonds2009scalably, agrawal2016scheduling} considered more complicated problem setups as discussed earlier.
\item For the more general flow time + energy problem, the competitive ratio bound is at most twice their counterparts for the flow time problem in the two respective cases.
\end{itemize}

\section{Problem Formulation}

Let there be an unlimited supply of servers, however, any job can be processed on any one server at any time (no parallelization is allowed). We consider that preemption is allowed, i.e., a job can be halted at any time and restarted later on any server.
Each server when executing a job at speed $s(t)$ at time $t$, consumes power $P(s(t))$. Let $P(x) = x^\alpha$ where $\alpha > 1$. In addition, there is a sum-power constraint of $\bp$ across all servers, i.e., $\sum_{i : s_i(t) > 0} P(s_i(t)) \le \bp$, for all $t$, where $s_i(t)$ is the speed of an active server $i$ at time $t$. 
 
If a server works with speed $s$ for a time duration $t$ on a job, then $st$ amount of work is completed for that job.
Set of jobs $\cJ$ arrive over time, where an arriving job $j \in \cJ$ with size $w_j$ is defined to be complete at time $d_j$, if $w_j$ amount of work has been completed for it by time $d_j$, where the work could have been done by different servers at different times. Hence the flow time is $\sum_{j\in \cJ} (d_j-a_j) = \int n(t) dt$, where $n(t)$ is the number of unfinished jobs in the system at time $t$.

The flow time problem is then
\begin{align}\label{eq:flowtime}
\min  \int n(t)dt, &
\end{align} subject to  
$ \ \sum_{i : s_i(t) > 0} P(s_i(t)) \le \bp,$ 
while the flow time + energy problem  is
\begin{align}\label{eq:flowtimeplusenergy}
\min  \int n(t)dt  + \int \sum_{i : s_i(t) > 0} P(s_i(t)) dt, &
\end{align}
$\quad \text{subject to} \ \sum_{i : s_i(t) > 0} P(s_i(t)) \le \bp. $
\begin{remark} Note that even under the sum power constraint, the metric of flow time + energy is meaningful, since it is not necessary that the energy used by an optimal algorithm is equal to the maximum possible allowed by the constraint. For example, when the number of outstanding jobs is small, an optimal algorithm may choose a small speed such that the total power consumed is less than the sum-power constraint.
\end{remark}
Next, we show that problem \eqref{eq:flowtime} is equivalent to the \textsc{sub-linear-sched} problem, that  has $N$ parallel and identical servers. 
Similar to \cite{edmonds2000scheduling, berg2019hesrpt}, we consider the continuous allocation model, where $N$ is treated as a single resource block which can be divided into chunks of arbitrary sizes and allocated to different jobs.
Any job is parallelizable with concave speedup, i.e.,  if job~$j$ is allotted 
$k_j(t)$ number of
servers at time $t$, then the service rate experienced by
job~$j$ at time $t$ is $s_j(t) = S(k_j(t)) = k_j(t)^{1/\alpha},$ where $\alpha > 1.$ Here,
$S(\cdot)$ denotes the speedup function, that is concave. The parameter $\alpha$ controls the parallelizability of any job, and depending on $\alpha$, jobs experience appropriate diminishing returns
from being allocated additional servers. Note that
\begin{equation}\label{eq:serverconst}
\sum_{j= 1}^{A(t)} k_j(t) \leq N, 
\end{equation} where $A(t)$ is the set of jobs that are given non-zero service rate at time $t$. Note that the objective is to minimize the flow time of all jobs.


To cast the \textsc{sub-linear-sched} problem as a flow time problem \eqref{eq:flowtime}, suppose that for each job we can `create'
its own dedicated server, and a job is processed on only one server, and cannot be parallelized.
Let $s_j(t)$ denote the speed allocated to the server processing job~$j$ at time $t$.
Let 
$k_j(t) = P(s_j(t)) := S^{-1}(s_j(t))$ be the power consumption of job~$j$ on its own server if it is processed at speed $s_j(t)$, where $P(s) = s^{\alpha}, \alpha>1$. Then, \eqref{eq:serverconst} is equivalent to 
$\sum_{j \in A(t)} P(s_j(t)) \leq N$,  the total
power used across all the active servers is at most $N$.

In both these models, the service speed of job~$j$ is $s_j,$ so the
flow times across both the models would be identical. Letting $N=\bp$, we see that  \textsc{sub-linear-sched} is equivalent to the flow time problem \eqref{eq:flowtime}.


In the following, we will consider Problem \eqref{eq:flowtimeplusenergy}, and propose algorithms and bound their competitive ratios \eqref{defn:cr}. Minimal changes to be made for algorithms to be feasible, and analysis to be applicable for Problem \eqref{eq:flowtime} are mentioned in Remark \ref{rem:prob1}. Consequently, we will only indicate the corresponding competitive ratio results for Problem \eqref{eq:flowtime}.

{\bf Metric}
We represent the optimal offline algorithm (that knows the entire job arrival sequence in advance) as $\opt$. 
Let $n(t)$ ($n_o(t)$) and $P_{\text{sum}}(t)$ ($P^o_{\text{sum}}(t)$) be the number of outstanding jobs with an online algorithm $\cA$ ($\opt$), and the sum of the power used by an online algorithm $\cA$ ($\opt$) across all servers at time $t$, respectively. For Problem \eqref{eq:flowtimeplusenergy}, we will consider the metric of competitive ratio which for an algorithm $\cA$ is defined as 
\begin{equation}\label{defn:cr}\mu_\cA  = \max_\sigma \frac{\int (n(t) + P_{\text{sum}}(t)) dt}{\int (n_o(t) + P^o_{\text{sum}}(t)) dt},
\end{equation}
where $\sigma$ is the input sequence consisting of jobs set $\cJ$. Since $\sigma$ is arbitrary, we are not making any assumption on the job arrival times, or their sizes. 

We will propose an online algorithm $\cA$, and bound $\mu_\cA \le \kappa$, by showing that for each time instant $t$
\begin{align}\label{eq:runcond}
  n(t) + P_{\text{sum}}(t) + d\Phi(t)/dt & \le \kappa( n_o(t) + P^o_{\text{sum}}(t)),
\end{align}
where $\Phi(t)$ is some function called the {\bf potential function} that satisfies the boundary conditions:
\begin{itemize}
\item $\Phi(t) = 0$ initially before all job arrivals and $\Phi(\infty) = 0$.
\item $\Phi(t)$ does not increase on any job arrival or job departure with the algorithm or the $\opt$.
\end{itemize}
Integrating \eqref{eq:runcond} over time, implies that the competitive ratio of $\cA$ is at most $\kappa$.

\section{All jobs available at time $0$} We first consider the simpler setting when all jobs of $\cJ$ arrive together at time $0$. For Problem \eqref{eq:flowtime}, this setting has been considered recently \cite{berg2019hesrpt}, and an optimal algorithm (called heSRPT) has been derived. 
In particular, with heSRPT, let at time $t$ there are $n(t)$ unfinished jobs that are indexed in decreasing order of their remaining sizes, $w_{n(t)}(t) \le \dots \le w_2(t) \le w_1(t)$. Then the number of servers dedicated to job $i=1, \dots, n(t)$ is 
\begin{equation}\label{def:hesrpt}k_i(t)= N \left( \left(\frac{i}{n(t)}\right)^{\left(\frac{1}{1-1/\alpha}\right)} - \left(\frac{i-1}{n(t)}\right)^{\left(\frac{1}{1-1/\alpha}\right)}\right),
\end{equation}
and corresponding speed is $S(k_i(t)) = k_i(t)^{1/\alpha}$.
Thus, shorter jobs get more servers, and consequently more speed.

We show that a simpler algorithm, called EQUI, that processes all jobs simultaneously at the same speed, and does not require the knowledge of remaining job sizes, has a  constant (depending on $\alpha$) competitive ratio for both Problem \eqref{eq:flowtime} and Problem \eqref{eq:flowtimeplusenergy}. 

We begin with some preliminaries.
Let $Q(x) = \frac{x}{P^{-1}(x)}$. For $P(x) = x^\alpha$, $Q(x) = x^{1-\frac{1}{\alpha}}$.
\begin{lemma}\label{lem:optmaxspeedspecial}
With $P^o_{\text{sum}}(t)$ as the sum power used by $\opt$ at any time $t$, the maximum speed devoted to processing any one job by the $\opt$ is at most $P^{-1}(P^o_{\text{sum}}(t))$. 
Moreover,  the sum of the speeds with which $\opt$ is processing any of its $k$ jobs is at most $Q(k)P^{-1}(P^o_{\text{sum}}(t))$.
\end{lemma}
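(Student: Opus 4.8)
The plan is to derive both bounds directly from two structural facts about $\opt$: each job runs on a single dedicated server (no parallelization), and the total power across all active servers obeys the sum-power constraint. Convexity of $P$, i.e. $\alpha > 1$, does the rest.

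For the first claim I would argue as follows. Since any job is processed on only one server, the power spent on a single job is part of the overall power budget, so if $\opt$ runs some job at speed $s$ at time $t$, then $P(s) \le P^o_{\text{sum}}(t)$. As $P$ is increasing, applying $P^{-1}$ gives $s \le P^{-1}(P^o_{\text{sum}}(t))$, which is exactly the stated bound on the maximum per-job speed.

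For the second claim I would fix any $k$ jobs that $\opt$ is processing at time $t$, with speeds $s_1,\dots,s_k$. Because these $k$ servers are among the active ones, the sum-power constraint gives $\sum_{j=1}^k P(s_j) = \sum_{j=1}^k s_j^\alpha \le P^o_{\text{sum}}(t)$. I then bound $\sum_{j=1}^k s_j$ by Hölder's inequality with conjugate exponents $\alpha$ and $\alpha/(\alpha-1)$ (equivalently the power-mean inequality, valid since $\alpha>1$), which yields $\sum_{j=1}^k s_j \le k^{1-1/\alpha}\bigl(\sum_{j=1}^k s_j^\alpha\bigr)^{1/\alpha}$. Substituting the power bound and recalling $P^{-1}(y)=y^{1/\alpha}$ and $Q(k)=k^{1-1/\alpha}$ gives $\sum_{j=1}^k s_j \le k^{1-1/\alpha}\bigl(P^o_{\text{sum}}(t)\bigr)^{1/\alpha} = Q(k)\,P^{-1}(P^o_{\text{sum}}(t))$, as required.

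There is no real obstacle here: the argument is essentially a one-line application of a standard norm inequality. The only point deserving care is that the bound corresponds to the worst case in which the power is split equally among the $k$ jobs, and it is precisely the convexity of $P$ (the assumption $\alpha>1$) that makes this equal split extremal; I would note that the inequality is in fact tight, attained when each of the $k$ jobs receives power $P^o_{\text{sum}}(t)/k$, and that at $\alpha=1$ the factor $Q(k)=k^{1-1/\alpha}$ collapses to $1$, consistent with the linear-speedup (fully parallelizable) regime.
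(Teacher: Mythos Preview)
Your argument is correct and is exactly the natural one: the first bound is immediate from monotonicity of $P$, and the second is the power-mean (equivalently H\"older) inequality applied to $\sum_j s_j^\alpha \le P^o_{\text{sum}}(t)$. The paper itself omits the proof as trivial, so there is nothing to compare against.
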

Proof is trivial and hence omitted.

\subsection{Algorithm EQUI}
At time $t$, if the outstanding number of jobs in the system is $n(t)$, then all $n(t)$ jobs are processed parallely on
$n(t)$ servers, each with identical speed 
\begin{equation}\label{defn:speedchoicespecialFT}
s(t) = P^{-1}\left(\frac{\min\{n(t), \bp\}}{n(t)}\right).
\end{equation}
 Thus, the total power 
used by EQUI at any time is $\le  n(t) P\left(P^{-1}\left(\frac{\min\{n(t), \bp\}}{n(t)}\right)\right) \le \min\{n(t), \bp\} \le  \bp$.
\begin{remark}\label{rem:prob1} All algorithms presented in the paper when applied to Problem \eqref{eq:flowtime} will have the term $\min\{n(t), \bp\}$ in their speed choice replaced by $\bp$. Similarly, for potential functions defined in \eqref{defn:phispecialFT}  and \eqref{defn:phi}, terms $\min\{., \bp\}$ will be replaced by $ \bp$.
\end{remark}
\subsection{Potential Function} 
At time $t$, let $A(t)$ be the set of unfinished jobs with EQUI  with $n(t) = |A(t)|$, and for the $i^{th}$ job,  $i\in A(t)$, let $q_i(t)$ be its remaining size. Then  
\begin{align}\label{}
 n^i(t,q) & = \begin{cases} 1 & \text{for} \  q\le q_i(t), \\
 0 & \text{otherwise.}
 \end{cases}
\end{align} 
Similarly, let $n_o(t)$ be the number of unfinished jobs with the $\opt$, and the corresponding quantity to $n^i(t,q)$ for the $i^{th}$ job with the $\opt$, be denoted by $n^i_o(t,q)$.

Consider the potential function  $  \Phi_{sf}(t) =$
\begin{equation}\label{defn:phispecialFT} 
 c_1 P^{-1}\left(\frac{n(t)}{\min\{n(t), \bp\}}\right) \left(\sum_{i\in A(t)}\int_{0}^\infty ( n^i(t,q) - n^i_o(t,q))^+ dq\right),
\end{equation}
where $c_1$ is a constant to be chosen later, and $(x)^+ = \max\{0,x\}$.

Clearly, $\Phi_{sf}(t)$ satisfies the first boundary condition. 
Since all jobs are available at time $0$, to check whether $\Phi_{sf}(t)$ satisfies the second boundary condition, 
we only need to check whether $\Phi_{sf}(t)$ increases on a departure of a job with either the EQUI or the $\opt$.
\begin{lemma}\label{lem:jumpequi}
Potential function $\Phi_{sf}(t)$ \eqref{defn:phispecialFT} does not increase on a departure of a job with either the EQUI or the $\opt$.
\end{lemma}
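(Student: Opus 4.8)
The plan is to first collapse $\Phi_{sf}(t)$ into a product of a prefactor depending only on $n(t)$ and a nonnegative ``lag'' term. Writing $q_i(t)$ and $q^o_i(t)$ for the remaining size of job $i$ under EQUI and under $\opt$ respectively, the indicators are $n^i(t,q) = \indicator{q \le q_i(t)}$ and $n^i_o(t,q) = \indicator{q \le q^o_i(t)}$, so each inner integral evaluates to $\int_0^\infty (n^i(t,q) - n^i_o(t,q))^+\,dq = (q_i(t) - q^o_i(t))^+$. Hence $\Phi_{sf}(t) = g(n(t))\,\Lambda(t)$, where $g(n) := c_1 P^{-1}(n/\min\{n,\bp\})$ and $\Lambda(t) := \sum_{i \in A(t)} (q_i(t) - q^o_i(t))^+ \ge 0$. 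Since $P(x) = x^\alpha$, we have $g(n) = c_1 \max\{1, (n/\bp)^{1/\alpha}\}$, which I will record as being non-decreasing in $n$; this monotonicity is the single structural fact driving the whole argument. Because all jobs are present at time $0$, the only discrete events are departures under EQUI or under $\opt$, and I treat the two separately.

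When a job $j$ departs under EQUI, its remaining size has just reached $q_j(t) = 0$, so its contribution to $\Lambda(t)$ is $(0 - q^o_j(t))^+ = 0$; removing $j$ from $A(t)$ therefore leaves $\Lambda(t)$ unchanged. At the same instant $n(t)$ drops by one, so the prefactor passes from $g(n(t))$ to $g(n(t)-1)$, and by the monotonicity of $g$ this can only decrease it (or leave it unchanged). A non-increasing prefactor multiplying an unchanged nonnegative lag gives $\Phi_{sf}(t^+) \le \Phi_{sf}(t^-)$.

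When a job $j$ departs under $\opt$, the count $n(t)$ of jobs outstanding under EQUI is unaffected, so the prefactor $g(n(t))$ does not change. If $j$ has already been completed by EQUI then $j \notin A(t)$ and $\Lambda(t)$ is untouched. Otherwise $j \in A(t)$, and as $q^o_j$ decreases continuously to $0$ the term $(q_j(t) - q^o_j(t))^+$ varies continuously, so $\Lambda(t)$ exhibits no upward jump at the departure. In either subcase $\Phi_{sf}$ does not increase.

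The only genuine content is the monotonicity of $g$, i.e.\ that $P^{-1}(n/\min\{n,\bp\})$ is non-decreasing in $n$: for $n \le \bp$ it is the constant $1$, and for $n > \bp$ it equals $(n/\bp)^{1/\alpha}$, which is increasing, so the two pieces agree at the threshold and the function is globally non-decreasing. The point that needs care is pairing this monotonicity with the observation that a departing EQUI job contributes \emph{exactly} zero to $\Lambda$, so that the drop in $n$ is never offset by a simultaneous increase in the lag. I expect no further obstacle, since the $\opt$-departure case involves no jump in either factor.
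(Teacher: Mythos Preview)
Your proof is correct and follows essentially the same approach as the paper: factor $\Phi_{sf}$ into a prefactor depending on $n(t)$ times a nonnegative lag, observe the lag is unchanged at any departure, and check the prefactor cannot grow when $n(t)$ drops. Your presentation is actually tidier than the paper's, since you package the prefactor behavior as monotonicity of $g(n)=c_1\max\{1,(n/\bp)^{1/\alpha}\}$ in one line, whereas the paper verifies the same fact by a three-case split on whether $n(t^-)$ and $n(t)$ exceed $\bp$.
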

Next, we characterize the drift $d \Phi_{sf}(t)/dt$.
\begin{lemma}\label{lem:driftphis} $d \Phi_{sf}(t)/dt  \le  -c_1((-\max\{n(t)-n_o(t),0\} ) $
\begin{align*}
  &  \quad  +  c_1 \left(\frac{1}{\alpha}\right) \max\{n(t),P^o_{\text{sum}}(t)\} + c_1\left(1-\frac{1}{\alpha}\right) n_o(t).
\end{align*}
\end{lemma}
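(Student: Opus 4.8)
The plan is to compute the drift on the open intervals between consecutive departure events. Since all jobs of $\cJ$ arrive at time $0$, the count $n(t)$ is constant on each such interval, and hence so is the prefactor $c(t) := P^{-1}\bigl(n(t)/\min\{n(t),\bp\}\bigr)$; the jumps of $\Phi_{sf}$ at departures are already controlled by Lemma \ref{lem:jumpequi}, so they need not be re-examined. Writing $\Psi(t) := \sum_{i\in A(t)}\int_0^\infty (n^i(t,q)-n^i_o(t,q))^+\,dq$, this reduces the task to showing $d\Phi_{sf}/dt = c_1 c(t)\, d\Psi/dt$ satisfies the stated bound. First I would simplify $\Psi$: since $n^i(t,q)=\indicator{q\le q_i(t)}$ and $n^i_o(t,q)=\indicator{q\le q^o_i(t)}$, where $q_i(t)$ and $q^o_i(t)$ are the remaining sizes of job $i$ under EQUI and $\opt$, the inner integral collapses to $(q_i(t)-q^o_i(t))^+$, so $\Psi(t)=\sum_i (q_i(t)-q^o_i(t))^+$ is simply the total lag of EQUI behind $\opt$.

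Next I would differentiate. Let $B(t)=\{i: q_i(t)>q^o_i(t)\}$ be the set of lagging jobs. Every such job has $q_i(t)>0$, so it is active under EQUI and is processed at the common speed $s(t)=1/c(t)$, while $\opt$ processes it at some rate $s^o_i(t)\ge 0$; jobs outside $B(t)$ contribute nothing to the drift. Hence $d\Psi/dt = -|B(t)|\,s(t) + \sum_{i\in B(t)} s^o_i(t)$. Multiplying by $c_1 c(t)$ and using the identity $c(t)s(t)=1$ turns the first term cleanly into $-c_1|B(t)|$.

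Two ingredients then finish the bound. For the counting term, a non-lagging EQUI-active job satisfies $q^o_i(t)\ge q_i(t)>0$ and is therefore $\opt$-active, so there are at most $n_o(t)$ of them; this gives $|B(t)|\ge \max\{n(t)-n_o(t),0\}$, matching the first term of the claim. For the $\opt$-speed term, only $\opt$-active jobs carry positive $s^o_i(t)$, so by Lemma \ref{lem:optmaxspeedspecial} applied to the at most $n_o(t)$ active jobs, $\sum_{i\in B(t)} s^o_i(t)\le Q(n_o(t))P^{-1}(P^o_{\text{sum}}(t))$. Substituting $P(x)=x^\alpha$, the surviving contribution $c_1\,c(t)Q(n_o(t))P^{-1}(P^o_{\text{sum}}(t))$ simplifies to $c_1\bigl(n(t)P^o_{\text{sum}}(t)/\min\{n(t),\bp\}\bigr)^{1/\alpha}\,n_o(t)^{1-1/\alpha}$.

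The crux, and the step I expect to be the main obstacle, is converting this product into the stated affine combination. I would apply the weighted AM--GM (Young) inequality $X^{1/\alpha}Y^{1-1/\alpha}\le \tfrac{1}{\alpha}X+(1-\tfrac{1}{\alpha})Y$ with $X=n(t)P^o_{\text{sum}}(t)/\min\{n(t),\bp\}$ and $Y=n_o(t)$, producing the $(1-\tfrac1\alpha)n_o(t)$ term directly, and then verify $\frac{n(t)P^o_{\text{sum}}(t)}{\min\{n(t),\bp\}}\le \max\{n(t),P^o_{\text{sum}}(t)\}$ by a short case split: if $n(t)\le\bp$ the left side equals $P^o_{\text{sum}}(t)$, while if $n(t)>\bp$ then, crucially using that $\opt$ also obeys the sum-power constraint $P^o_{\text{sum}}(t)\le\bp$, it is at most $n(t)$. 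Combining this with the counting bound yields exactly the drift estimate asserted in the lemma.
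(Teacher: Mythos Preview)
Your proposal is correct and follows essentially the same route as the paper: both decompose the drift into EQUI's contribution and $\opt$'s contribution, use the identity $c(t)s(t)=1$ together with the counting bound $|B(t)|\ge\max\{n(t)-n_o(t),0\}$ for the former, and invoke Lemma~\ref{lem:optmaxspeedspecial} plus the weighted AM--GM inequality for the latter. The only cosmetic difference is the order of operations on the $\opt$ term: the paper first splits into the cases $\min\{n(t),\bp\}=\bp$ and $\min\{n(t),\bp\}=n(t)$, simplifies each separately (using $P^o_{\text{sum}}(t)\le\bp$ in the first case), and then applies AM--GM, whereas you carry the unified expression $\bigl(n(t)P^o_{\text{sum}}(t)/\min\{n(t),\bp\}\bigr)^{1/\alpha}n_o(t)^{1-1/\alpha}$ through Young's inequality first and do the same case split afterwards to reach $\max\{n(t),P^o_{\text{sum}}(t)\}$.
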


\begin{theorem}\label{thm:flowtimeplusenergy}
  The competitive ratio of EQUI for Problem \eqref{eq:flowtimeplusenergy} when all jobs are available at time $0$, is at most $$\mu(\alpha) = \left(2-\frac{1}{\alpha}\right) \frac{2}{\left(1-\left(\frac{1}{\alpha}\right)\right)}.$$ For $\alpha=2$, $\mu(2) = 6$. 
  Moreover, $\mu(\alpha)$ is a decreasing function of $\alpha > 1$.
\end{theorem}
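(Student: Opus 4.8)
The plan is to apply the potential-function machinery set up in the excerpt, combining the drift bound from Lemma~\ref{lem:driftphis} with the running condition~\eqref{eq:runcond}. The boundary conditions are already handled: $\Phi_{sf}(t)$ satisfies the first condition trivially, and Lemma~\ref{lem:jumpequi} establishes the second (no increase on departures; since all jobs arrive at time $0$, arrivals need not be checked). So the entire argument reduces to verifying the pointwise inequality
\begin{equation*}
  n(t) + P_{\text{sum}}(t) + \frac{d\Phi_{sf}(t)}{dt} \le \kappa\bigl(n_o(t) + P^o_{\text{sum}}(t)\bigr)
\end{equation*}
for an appropriate constant $\kappa = \mu(\alpha)$ and a suitable choice of the free constant $c_1$, after which integrating over $t$ yields the competitive-ratio bound.

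First I would bound the EQUI power term: from the speed choice~\eqref{defn:speedchoicespecialFT}, the total power satisfies $P_{\text{sum}}(t) \le \min\{n(t),\bp\} \le n(t)$, so $n(t) + P_{\text{sum}}(t) \le 2n(t)$. Next I would substitute the drift estimate of Lemma~\ref{lem:driftphis}, namely
\begin{equation*}
  \frac{d\Phi_{sf}(t)}{dt} \le -c_1\max\{n(t)-n_o(t),0\} + c_1\Bigl(\tfrac{1}{\alpha}\Bigr)\max\{n(t),P^o_{\text{sum}}(t)\} + c_1\Bigl(1-\tfrac{1}{\alpha}\Bigr)n_o(t),
\end{equation*}
into the left-hand side. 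The goal is to choose $c_1$ so that the coefficient of $n(t)$ on the left is driven to zero (or made nonpositive), leaving only terms proportional to $n_o(t)$ and $P^o_{\text{sum}}(t)$ on the right. The key algebraic maneuver will be handling the two $\max$ terms: I would split into the regimes $n(t) \ge n_o(t)$ and $n(t) < n_o(t)$, and likewise bound $\max\{n(t),P^o_{\text{sum}}(t)\} \le n(t) + P^o_{\text{sum}}(t)$ when convenient, so that the negative drift term $-c_1(n(t)-n_o(t))^+$ cancels the surplus $n(t)$ contributed by $2n(t)$ and by the $\frac{1}{\alpha}$ term.

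The main obstacle I expect is calibrating $c_1$ correctly so that the net coefficient of $n(t)$ vanishes while the resulting coefficient on the optimal-cost side $n_o(t) + P^o_{\text{sum}}(t)$ collapses to exactly $\mu(\alpha) = \bigl(2-\frac1\alpha\bigr)\frac{2}{1-1/\alpha}$. Concretely, one needs $c_1(1-\frac1\alpha)$ to absorb the $2n(t)$ term (suggesting $c_1 = \frac{2}{1-1/\alpha}$ or similar), and then the leftover $n_o(t)$ and $P^o_{\text{sum}}(t)$ coefficients must each be at most $\mu(\alpha)$; the factor $(2-\frac1\alpha)$ should emerge from combining the $\frac1\alpha$-weighted $\max$ term with the $(1-\frac1\alpha)$-weighted $n_o(t)$ term after the cancellation. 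Once the constant is pinned down, verifying monotonicity of $\mu(\alpha)$ in $\alpha>1$ is a routine calculus check: writing $x = 1/\alpha \in (0,1)$, one has $\mu = 2(2-x)/(1-x)$, whose derivative in $x$ is positive, so $\mu$ increases in $x$ and hence \emph{decreases} in $\alpha$, and evaluating at $\alpha=2$ gives $\mu(2)=6$.
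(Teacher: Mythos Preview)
Your proposal is correct and follows essentially the same route as the paper: bound $P_{\text{sum}}(t)\le n(t)$, invoke Lemma~\ref{lem:driftphis}, split on whether $n(t)\ge n_o(t)$, crudely bound $\max\{n(t),P^o_{\text{sum}}(t)\}\le n(t)+P^o_{\text{sum}}(t)$, and then choose $c_1=2/(1-1/\alpha)$ to kill the $n(t)$ coefficient, yielding $\kappa=(2-1/\alpha)c_1$. The paper additionally singles out the sub-case $n_o(t)=0$, but with your choice of $c_1$ that case is already absorbed by the general Case~II inequality, so no gap arises.
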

\begin{proof}
Case I : $\max\{n(t)-n_o(t),0\}  =0.$ In this case, from Lemma \ref{lem:driftphis}, we can write   \eqref{eq:runcond}, as $n(t)  + P_{\text{sum}}(t) +  d\Phi(t)/dt$
\begin{align}\nn
& = n(t)   +  \min\{n(t),\bp\} + c_1  \left(\frac{1}{\alpha}\right) \max\{n(t),P^o_{\text{sum}}(t)\} \\ \nn
& \quad + c_1 \left(1-\frac{1}{\alpha}\right) n_o(t), \\ \nn
& \le 2n(t)   +  c_1 \left(\frac{1}{\alpha}\right) (n(t)+P^o_{\text{sum}}(t))+ c_1 \left(1-\frac{1}{\alpha}\right)n_o(t),\\ \nn
  & \stackrel{(a)}\le (2+c_1)n_o(t) + c_1\left(\frac{1}{\alpha}\right) P^o_{\text{sum}}(t), \\\label{eq:runcond1}
  & \le (2+c_1)(n_o(t) +P^o_{\text{sum}}(t)),
\end{align}
where $(a)$ follows since $n(t) \le n_o(t)$.

Case II: $n_o(t)> 0$, and $\max\{n(t)-n_o(t),0\} = n(t)-n_o(t)$.
Using Lemma \ref{lem:driftphis}, we can write \eqref{eq:runcond}, as $n(t)  + P_{\text{sum}}(t)  + d\Phi(t)/dt $
\begin{align}\nn
  & = n(t)  + \min\{n(t),\bp\} - c_1(n(t)-n_o(t)) \\ \nn 
  & \quad  +   c_1\left(\frac{1}{\alpha}\right)  \max\{n(t),P^o_{\text{sum}}(t)\}  + c_1 \left(1-\frac{1}{\alpha}\right) n_o(t), \\  \nn
  & \le 2n(t)  - c_1(n(t)-n_o(t))+   c_1 \left(\frac{1}{\alpha}\right) (n(t)+P^o_{\text{sum}}(t))\\ \nn 
  & \quad + c_1\left(1-\frac{1}{\alpha}\right) n_o(t), \\  \nn
  & \le n(t)\left(2-c_1 + c_1\left(\frac{1}{\alpha}\right)\right) + n_o(t)\left(c_1\left(1+\left(1-\frac{1}{\alpha}\right)\right)\right)\\ \nn
  & \quad  + c_1 \left(\frac{1}{\alpha}\right)P^o_{\text{sum}}(t), \\  
  & \le \left(c_1\left(2-\frac{1}{\alpha}\right)\right) (n_o(t) +P^o_{\text{sum}}(t) )\label{eq:runcond2}
\end{align}
for $ c_1\ge 2/\left(1-\left(\frac{1}{\alpha}\right)\right)$.
When $n_o(t) =0$, then we do not have to add the contribution of the $\opt$ from Lemma \ref{lem:driftphis}, and we get $n(t)  + P_{\text{sum}}(t)  + d\Phi(t)/dt $
\begin{align}\nn
  & = n(t)  + n(t) \left(\frac{\min\{n(t), \bp\}}{n(t)}\right) - c_1n(t), \\\label{eq:runcond3}
  & \le n(t)(2-c_1) \le 0,
\end{align}
for $c_1=2$.
Thus, choosing $c_1= 2/\left(1-\left(\frac{1}{\alpha}\right)\right)$, 
from \eqref{eq:runcond1}, \eqref{eq:runcond2},  and \eqref{eq:runcond3},  \eqref{eq:runcond} holds for $\kappa= 
\left(2-\frac{1}{\alpha}\right) \frac{2}{\left(1-\left(\frac{1}{\alpha}\right)\right)}$.
\end{proof} 

Identical proof follows for Problem \eqref{eq:flowtime}, since essentially, the only difference when considering Problem \eqref{eq:flowtime} is that we can remove the energy term corresponding to $P_{\text{sum}}(t)$, and choose $c_1 = 1/\left(1-\left(\frac{1}{\alpha}\right)\right)$, and show that \eqref{eq:runcond} holds for $\kappa=\left(2-\frac{1}{\alpha}\right) \frac{1}{\left(1-\left(\frac{1}{\alpha}\right)\right)}$.

\begin{theorem}
  The competitive ratio of  EQUI for Problem \eqref{eq:flowtime} when all jobs are available at time $0$, 
  is at most $\left(2-\frac{1}{\alpha}\right) \frac{1}{\left(1-\left(\frac{1}{\alpha}\right)\right)}$. For $\alpha=2$, the upper bound is at most $3$, and decreases to $2$ as $\alpha$ increases.
\end{theorem}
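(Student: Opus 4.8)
The plan is to rerun the potential-function argument of Theorem~\ref{thm:flowtimeplusenergy} essentially verbatim, absorbing only the changes mandated by Remark~\ref{rem:prob1} together with the fact that the objective \eqref{eq:flowtime} no longer charges for energy. Concretely I would (i) delete the energy summand $P_{\text{sum}}(t)$ from the left of the running condition \eqref{eq:runcond} and $P^o_{\text{sum}}(t)$ from its right, so the target inequality becomes $n(t) + d\Phi_{sf}(t)/dt \le \kappa\, n_o(t)$; (ii) replace $\min\{n(t),\bp\}$ by $\bp$ in both the EQUI speed \eqref{defn:speedchoicespecialFT} and the potential \eqref{defn:phispecialFT}, so that EQUI always spreads the full power $\bp$ and the multiplier is $R(t) = P^{-1}(n(t)/\bp)$; and (iii) take $c_1 = 1/(1-1/\alpha)$ in place of $2/(1-1/\alpha)$.

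The one genuinely new ingredient is the analogue of Lemma~\ref{lem:driftphis} for this modified potential. The EQUI contribution is unchanged: because the multiplier $R(t)=P^{-1}(n(t)/\bp)$ and the common speed $s(t)=P^{-1}(\bp/n(t))$ still satisfy $R(t)s(t)=1$, the same matching argument shows EQUI drives $\Phi_{sf}$ down at rate $c_1\max\{n(t)-n_o(t),0\}$. For the $\opt$ term I would bound the rate at which $\opt$ inflates the lag by Lemma~\ref{lem:optmaxspeedspecial}, namely by $Q(n_o(t))\,P^{-1}(P^o_{\text{sum}}(t))$, and then use $P^o_{\text{sum}}(t)\le \bp$ with $R(t)=(n(t)/\bp)^{1/\alpha}$ to cancel the $\bp$'s, leaving $c_1\,n(t)^{1/\alpha} n_o(t)^{1-1/\alpha}$. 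A single application of the weighted mean inequality $n^{1/\alpha} n_o^{1-1/\alpha} \le \frac{1}{\alpha} n + (1-\frac{1}{\alpha}) n_o$ then gives $d\Phi_{sf}(t)/dt \le -c_1\max\{n(t)-n_o(t),0\} + c_1\frac{1}{\alpha}\,n(t) + c_1(1-\frac{1}{\alpha})\,n_o(t)$. Note that, in contrast to Lemma~\ref{lem:driftphis}, no $P^o_{\text{sum}}(t)$ survives on the right, precisely because $\opt$ also obeys the sum-power constraint.

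With this drift the verification of \eqref{eq:runcond} splits into the same two cases. If $n(t)\le n_o(t)$, then $\max\{n(t)-n_o(t),0\}=0$ and, using $n(t)\le n_o(t)$, the left side is at most $(1+c_1)n_o(t)$, where $1+c_1 = (2-1/\alpha)/(1-1/\alpha)=\kappa$. If $n(t)>n_o(t)$, the term $-c_1(n(t)-n_o(t))$ is kept, the coefficient of $n(t)$ becomes $1-c_1(1-1/\alpha)$, which vanishes exactly at $c_1=1/(1-1/\alpha)$, and the coefficient of $n_o(t)$ is $c_1(2-1/\alpha)=\kappa$; the subcase $n_o(t)=0$ gives $n(t)[\,1-c_1(1-1/\alpha)\,]=0\le 0$. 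Integrating \eqref{eq:runcond} over time, with the boundary conditions holding via the analogue of Lemma~\ref{lem:jumpequi} for the modified $\Phi_{sf}$, yields $\mu\le\kappa=(2-1/\alpha)/(1-1/\alpha)$. Finally, writing $u=1/\alpha\in(0,1)$ the bound is $(2-u)/(1-u)$, whose $u$-derivative is $1/(1-u)^2>0$, so it increases in $u$ and hence decreases in $\alpha$, equalling $3$ at $\alpha=2$ and approaching $2$ as $\alpha\to\infty$.

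The main obstacle is the drift re-derivation in the second paragraph: one must check that swapping $\min\{n(t),\bp\}$ for $\bp$ preserves the tight identity $R(t)s(t)=1$ (on which the $-c_1\max\{n(t)-n_o(t),0\}$ term, and hence the whole scheme, depends) and, more delicately, that the $\opt$ contribution can be reduced to a combination of $n(t)$ and $n_o(t)$ alone. This reduction is exactly where the halving of $c_1$, and therefore of $\kappa$ relative to Theorem~\ref{thm:flowtimeplusenergy}, comes from: the left side of \eqref{eq:runcond} no longer carries the extra $\min\{n(t),\bp\}\le n(t)$ summand that forced the larger constant there.
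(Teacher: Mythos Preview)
Your proposal is correct and follows essentially the same approach as the paper: the paper's own proof of this theorem is a one-sentence remark that the argument of Theorem~\ref{thm:flowtimeplusenergy} goes through verbatim once the $P_{\text{sum}}(t)$ term is dropped and $c_1$ is halved to $1/(1-1/\alpha)$, which is exactly what you carry out in detail (with the additional bookkeeping from Remark~\ref{rem:prob1}). Your observation that the $\opt$ drift collapses to a pure $n(t),n_o(t)$ combination via $P^o_{\text{sum}}(t)\le\bp$, so that no $P^o_{\text{sum}}(t)$ is needed on the right of \eqref{eq:runcond}, is precisely the mechanism behind the halving of the constant.
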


{\it Discussion:} In this section, we showed that for minimizing flow time (Problem \eqref{eq:flowtime}), a simpler algorithm (EQUI) than the optimal heSRPT algorithm \cite{berg2019hesrpt}, that processes all available jobs with the same speed, is constant (depending only on $\alpha$) competitive. Moreover, as $\alpha$ increases, speeds chosen by EQUI and heSRPT algorithm \eqref{def:hesrpt} converge, and that is also reflected in the competitive ratio bound that improves as $\alpha$ increases. 
Thus, knowing job sizes and using job dependent speed is not critical for staying close to the optimal performance.
The utility of EQUI is that it does not need to know the exact remaining sizes of the jobs, thus making it applicable for more wider network setting where pipelining \cite{isard2007dryad, condie2010mapreduce, rossbach2013dandelion} is implemented, and jobs on arrival do not reveal their true sizes. 

\section{Online Job Arrivals}
In this section, we consider Problem \eqref{eq:flowtimeplusenergy} in the online case, where jobs arrive over time with arbitrary sizes and at arbitrary time instants. 
\subsection{Algorithm \textsc{Fractional-LCFS-EQUI}}

At time $t$, let the outstanding number of jobs in the system be $n(t)$.
{\bf Scheduling:} Process the $\beta n(t), \beta<1,$ jobs that have arrived {\bf most recently}  in their respective 
$\beta n(t)$ servers. \footnote{If $\beta n(t)$ is fractional, then we mean $\lceil \beta n(t) \rceil$.} 
{\bf Speed:} Use EQUI, to process all the $\beta n(t)$ jobs  at equal speed 
$s(t) = P^{-1}\left(\frac{\min\{n(t),\bp\}}{ \beta n(t)}\right)$. 

By its very definition, \textsc{Fractional-LCFS-EQUI} satisfies the the total power 
constraint as follows $P_\text{sum}(t)$
$$ \le \beta n(t) P\left(P^{-1}\left(\frac{\min\{n(t),\bp\}}{\beta n(t)}\right)\right) \le \min\{n(t),\bp\}\le  \bp.$$
\begin{remark}Choosing $\beta=1$, \textsc{Fractional-LCFS-EQUI} is identical to EQUI.  Intuitively following heSRPT algorithm \eqref{def:hesrpt} at each time $t$ in the online case appears better than \textsc{Fractional-LCFS-EQUI}, since it is locally optimal, however, analyzing the heSRPT algorithm in the online case appears challenging. 
\end{remark}

The main result of this section is as follows.

\begin{theorem}\label{thm:flowtimeplusenergyonline}
  For any $\alpha > 1$, there exists a $\beta < 1$, such that the competitive ratio of algorithm \textsc{Fractional-LCFS-EQUI} for Problem \eqref{eq:flowtimeplusenergy} is a constant (depends only on $\alpha$) and is independent of the number of jobs, and their sizes. For example, for $2\le \alpha \le 3$, with $\beta = \frac{1}{6}$, the competitive ratio is at most $693$.
\end{theorem}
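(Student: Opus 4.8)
The plan is to follow the same potential-function route used for Theorem~\ref{thm:flowtimeplusenergy}: exhibit a potential $\Phi(t)$ (the function \eqref{defn:phi}) that satisfies the two boundary conditions and then verify the running condition \eqref{eq:runcond} pointwise for a suitable constant $\kappa=\kappa(\alpha)$, after which integrating over $t$ yields the competitive-ratio bound. Since the power constraint is met with $P_{\text{sum}}(t)\le \min\{n(t),\bp\}\le n(t)$, the left-hand side of \eqref{eq:runcond} is at most $2n(t)+d\Phi/dt$, so the entire burden falls on engineering $\Phi$ so that its drift both absorbs $2n(t)$ when $n(t)$ is large relative to $n_o(t)$ and stays controlled by $\kappa(n_o(t)+P^o_{\text{sum}}(t))$ otherwise. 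The natural candidate is a recency-ranked generalization of $\Phi_{sf}$: rank the outstanding jobs of \textsc{Fractional-LCFS-EQUI} by arrival time (most recent first), attach a rank-dependent weight (generalizing the scalar prefactor in \eqref{defn:phispecialFT}) that reflects the boost factor $\beta^{-1/\alpha}=P^{-1}(1/\beta)$ conferred on the served jobs, and measure the matched lag $\int_0^\infty (n^i(t,q)-n^i_o(t,q))^+\,dq$ against $\opt$ exactly as in \eqref{defn:phispecialFT}.

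First I would check the boundary conditions. The conditions $\Phi(0)=\Phi(\infty)=0$ and the non-increase on departures are handled as in Lemma~\ref{lem:jumpequi}. The genuinely new case is a job arrival, which does not arise in the all-at-time-$0$ setting, and this is where the LCFS discipline pays off: a freshly arrived job enters both the algorithm and $\opt$ with identical remaining size, so its matched lag is zero at the instant of arrival, and placing it at the top of the recency order injects no new lag. One then only has to argue that demoting each previously outstanding job by one rank does not raise $\sum_i g(i/n(t))\,(\cdots)$, which I would reduce to a monotonicity property imposed on the weights $g$ by construction.

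Next I would compute the drift $d\Phi/dt$, splitting it into the contribution of the algorithm's own processing and that of $\opt$. The algorithm runs its $\beta n(t)$ most-recent jobs --- precisely the highest-weight ranks --- at the boosted speed $s(t)=P^{-1}(\min\{n(t),\bp\}/(\beta n(t)))$, so this term is strictly negative and larger in magnitude than in the EQUI analysis by roughly the boost factor $\beta^{-1/\alpha}$; this is what lets a fraction $\beta<1$ of servers still drive the lag down fast enough. The $\opt$ contribution is positive but bounded via Lemma~\ref{lem:optmaxspeedspecial}, since the total speed $\opt$ devotes to any $k$ of its jobs is at most $Q(k)P^{-1}(P^o_{\text{sum}}(t))$; this converts the drift into terms in $n(t)$, $n_o(t)$ and $P^o_{\text{sum}}(t)$, as in Lemma~\ref{lem:driftphis}. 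Assembling these and performing the two-case split on the sign of $n(t)-n_o(t)$ (as in the proof of Theorem~\ref{thm:flowtimeplusenergy}) should yield \eqref{eq:runcond}; I would then fix $c_1$ and finally optimize $\beta<1$ to make the constant finite, specializing to $\beta=1/6$ for $2\le\alpha\le3$ to recover the stated value $693$.

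The main obstacle is the drift step under the fractional discipline. Because only $\beta n(t)$ jobs are served, the remaining $(1-\beta)n(t)$ idle jobs contribute their full count to the objective while making no progress and hence generating no negative drift, so the boosted speed on the served jobs must more than compensate for this dead weight in the regime $n(t)\gg n_o(t)$. Forcing the weighted negative drift to beat $2n(t)$ pushes $\beta$ to be bounded away from $1$, yet taking $\beta$ too small erodes the coverage of the potential and inflates the $\opt$-side bound; the constant $\kappa(\alpha)$ (and the loose value $693$) emerges precisely from balancing this tension. A secondary difficulty is verifying that the recency-ranked weights are simultaneously monotone enough for the arrival condition and aligned with the served ranks, so that the drift actually captures the full boost rather than a diluted fraction of it.
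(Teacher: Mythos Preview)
Your overall architecture is the paper's: the potential \eqref{defn:phi}, Lemma~\ref{lem:optmaxspeedspecial} for the $\opt$ side, and a drift computation split into the algorithm's and $\opt$'s contributions. Two points, however, do not go through as you describe.

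\textbf{The case split is the wrong one.} You propose to split on the sign of $n(t)-n_o(t)$, exactly as in Theorem~\ref{thm:flowtimeplusenergy}. That split works for EQUI because EQUI serves \emph{all} $n(t)$ jobs, so whenever $n(t)>n_o(t)$ at least $n(t)-n_o(t)$ of the served jobs satisfy $(w_j^A-w_j^o)^+>0$ and contribute negative drift. Under \textsc{Fractional-LCFS-EQUI} only the $\beta n(t)$ most recent jobs are served, and in the worst case $\opt$ has finished only the oldest (unserved) jobs; then among the served jobs at most $\max\{0,\beta n(t)-n_o(t)\}$ are guaranteed to have positive lag. So when $n_o(t)$ sits in the range $[\beta n(t),\,n(t))$ you are still in your Case~II, yet the algorithm's negative drift can be zero and cannot absorb $2n(t)$. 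The fix, which is what the paper does, is to introduce an auxiliary threshold $\gamma<\beta$ and split at $n_o(t)=\gamma n(t)$: when $n_o(t)\le\gamma n(t)$ at least $(\beta-\gamma)n(t)$ served jobs have positive lag (Lemma~\ref{lem:algphi}), and when $n_o(t)>\gamma n(t)$ you discard the algorithm's drift entirely and use $n(t)<n_o(t)/\gamma$ (Lemma~\ref{lem:optphi}). The competitive ratio $\frac{2+c}{\gamma}$ and the constraint \eqref{eq:condsecc} on $(\beta,\gamma,c)$ come precisely from balancing these two regimes; without $\gamma$ the argument does not close, and this is also why the numerical bound for $\alpha\in[2,3]$ uses $\gamma=\beta^2=1/36$, not just $\beta=1/6$.

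\textbf{The arrival boundary condition is simpler than you make it.} If you define rank as the paper does --- $r_j(t)$ equals the number of outstanding jobs that arrived \emph{before} $j$ --- then a new arrival has the highest rank and the ranks of existing jobs do not change at all, so $\Phi$ is unchanged on arrival simply because the new term has $w_j^A-w_j^o=0$ (Lemma~\ref{lem:jumplcfs}). There is no ``demotion'' and no monotonicity of weights to verify. Your ``most recent first'' convention forces an unnecessary shift on every arrival.
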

We get the result for Problem \eqref{eq:flowtime} as a corollary as follows.
\begin{corollary}\label{cor:online}
  For any $\alpha > 1$, there exists a $\beta < 1$, such that the competitive ratio of algorithm \textsc{Fractional-LCFS-EQUI} for Problem \eqref{eq:flowtime} is a constant (depends only on $\alpha$) and is independent of the number of jobs, and their sizes. In particular, the competitive ratio will be at most half of the competitive ratio for Problem \eqref{eq:flowtimeplusenergy}. For example, for $2\le \alpha \le 3$, with $\beta = \frac{1}{6}$, the competitive ratio is at most $345$.
\end{corollary}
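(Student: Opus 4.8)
The plan is to prove the corollary by mirroring the passage from Theorem \ref{thm:flowtimeplusenergy} to its flow-time-only analogue (the paragraph immediately after the proof of Theorem \ref{thm:flowtimeplusenergy}), but now in the online setting governed by Theorem \ref{thm:flowtimeplusenergyonline}. Concretely, I would re-run the potential-function argument that establishes the running condition \eqref{eq:runcond} for \textsc{Fractional-LCFS-EQUI} on Problem \eqref{eq:flowtimeplusenergy}, making the two substitutions dictated by Remark \ref{rem:prob1}: the algorithm's common speed becomes $s(t)=P^{-1}(\bp/(\beta n(t)))$, and every occurrence of $\min\{\cdot,\bp\}$ in the potential \eqref{defn:phi} is replaced by $\bp$. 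Simultaneously, because the objective of Problem \eqref{eq:flowtime} carries no energy term, I would drop $P_{\text{sum}}(t)$ from the left-hand side and $P^o_{\text{sum}}(t)$ from the right-hand side of \eqref{eq:runcond}, so that the target inequality becomes $n(t) + d\Phi(t)/dt \le \kappa' n_o(t)$, and it suffices to show $\kappa' \le \kappa/2$ where $\kappa$ is the constant delivered by Theorem \ref{thm:flowtimeplusenergyonline}.

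The steps, in order, are: (i) check that the modified potential still satisfies the two boundary conditions, which is immediate since replacing $\min\{n(t),\bp\}$ by $\bp$ only rescales a leading coefficient and affects neither its vanishing before the first arrival and at $t=\infty$ nor its monotonicity across arrivals and departures; (ii) recompute the drift $d\Phi(t)/dt$ with the modified speed, which reproduces the same functional form as in the proof of Theorem \ref{thm:flowtimeplusenergyonline} up to the constants introduced by the $\bp$-substitution; and (iii) assemble the running condition and carry out the same case analysis on the sign of $n(t)-n_o(t)$ and on whether $n_o(t)=0$. The crux is step (iv): on the left-hand side the energy contribution satisfies $P_{\text{sum}}(t) \le \min\{n(t),\bp\} \le n(t)$, so in the flow-time-plus-energy analysis the leading work term is effectively $n(t)+P_{\text{sum}}(t)\le 2n(t)$, whereas here it is only $n(t)$. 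Exactly as in the EQUI case, halving this coefficient halves the value of the governing constant (the analogue of $c_1$, which needs only $c_1 \ge 1/(1-1/\alpha)$ rather than $2/(1-1/\alpha)$ to make the coefficient of $n(t)$ nonpositive when $n(t)>n_o(t)$), and this in turn halves the resulting competitive ratio. Substituting $2\le\alpha\le3$ and $\beta=\frac16$ then yields the stated bound of at most $345$.

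The main obstacle I anticipate is handling the term in the drift bound that measures OPT's processing rate, the analogue of the $\frac{1}{\alpha}\max\{n(t),P^o_{\text{sum}}(t)\}$ term in Lemma \ref{lem:driftphis}. In the flow-time-plus-energy proof this term is disposed of by bounding $\max\{n(t),P^o_{\text{sum}}(t)\}\le n(t)+P^o_{\text{sum}}(t)$ and absorbing $P^o_{\text{sum}}(t)$ into the right-hand side $\kappa P^o_{\text{sum}}(t)$; that route is unavailable for Problem \eqref{eq:flowtime}, where nothing but $n_o(t)$ may appear on the right. Instead I would control OPT's speed directly through the power constraint $P^o_{\text{sum}}(t)\le \bp$ and Lemma \ref{lem:optmaxspeedspecial}, using that the modified potential's leading factor $P^{-1}(n(t)/\bp)$ is calibrated precisely against OPT's maximum per-job speed $P^{-1}(\bp)$. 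Verifying that these cross terms still close with the halved constant, rather than merely with $\kappa$, is the delicate bookkeeping that carries the factor-of-two improvement through the online argument.
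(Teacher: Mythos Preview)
Your high-level idea matches the paper's one-sentence proof: for Problem~\eqref{eq:flowtime} one drops the energy term $P_{\text{sum}}(t)$ from the left of the running condition \eqref{eq:runcond}, and because throughout the proof of Theorem~\ref{thm:flowtimeplusenergyonline} that term enters only through $n(t)+P_{\text{sum}}(t)\le 2n(t)$, the leading $2$ becomes a $1$, the constant $c$ in \eqref{eq:condsecc} halves, and the final bound $\frac{2+c}{\gamma}$ becomes $\frac{1+c/2}{\gamma}$, exactly half.

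However, your detailed plan repeatedly imports machinery from the \emph{offline} EQUI analysis (Theorem~\ref{thm:flowtimeplusenergy}) into what must be the \emph{online} argument. The case analysis you describe---on the sign of $n(t)-n_o(t)$---is the one used for EQUI; for \textsc{Fractional-LCFS-EQUI} the split is $n_o(t)>\gamma n(t)$ versus $n_o(t)\le\gamma n(t)$, since Lemma~\ref{lem:algphi} only yields a useful negative drift in the latter regime. Likewise, the drift bound you cite, $\frac{1}{\alpha}\max\{n(t),P^o_{\text{sum}}(t)\}$ from Lemma~\ref{lem:driftphis}, and the constant $c_1$ with threshold $1/(1-1/\alpha)$, belong to the offline potential $\Phi_{sf}$ in \eqref{defn:phispecialFT}; here you must work with Lemmas~\ref{lem:optphi} and~\ref{lem:algphi} and the constant $c$ of \eqref{eq:condsecc}. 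The ``leading factor $P^{-1}(n(t)/\bp)$'' you invoke is again from $\Phi_{sf}$, not from the rank-based potential \eqref{defn:phi} governing the online proof.

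Your worry about $P^o_{\text{sum}}(t)$ on the right-hand side is well placed, and your proposed fix is the correct one: once $\min\{r_j(t),\bp\}$ is replaced by $\bp$ in \eqref{defn:phi} per Remark~\ref{rem:prob1}, re-running the proof of Lemma~\ref{lem:optphi} gives only the second branch, $d\Phi(t)/dt\le c\,n(t)\,Q(n_o(t))/Q(n(t))$, uniformly (the $P^{-1}(P^o_{\text{sum}}(t))\le P^{-1}(\bp)$ factor cancels the denominator), so no $P^o_{\text{sum}}(t)$ ever appears and the right-hand side of \eqref{eq:runcond} involves $n_o(t)$ alone. With that correction and the proper case split, the halving goes through as you intend.
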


{\it Discussion:} Similar to EQUI, algorithm \textsc{Fractional-LCFS-EQUI} is also a non-clairvoyant algorithm, i.e., it does not need to know the remaining size of any outstanding job. The main novelty of Theorem \ref{thm:flowtimeplusenergyonline} over previous such results \cite{edmonds2000scheduling, edmonds2009scalably, gupta2012scheduling}, is that it is proven without needing resource augmentation. With resource augmentation, an online algorithm is given servers that are allowed to operate at speed $s(1+\theta), \theta >0$ while consuming only power $P(s)$, but the $\opt'$s consumption is kept intact at $P(s)$ with speed $s$. 
Thus, an online algorithm is given extra/faster resources. In \cite{edmonds2000scheduling, edmonds2009scalably, gupta2012scheduling}, algorithms with competitive ratio as a function of $\alpha$ and $\theta$ have been derived for a similar but more complicated non-clairvoyant setting. 



\subsection{Proof of Theorem \ref{thm:flowtimeplusenergyonline}} 
From here on we refer to algorithm \textsc{Fractional-LCFS-EQUI} as just algorithm.
Let at time $t$, the set of outstanding (unfinished) number of jobs with the algorithm be $A(t)$ with $n(t) = |A(t)|$. 
Let at time $t$, the {\bf rank} $r_j(t)$ of a job $j \in A(t)$  be equal to the number of outstanding jobs of $A(t)$ with the algorithm that have arrived before job $j$. Note that the rank of a job does not change on arrival of a new job, but can change if a job departs that had arrived earlier.

As before, $Q(x) = \frac{x}{P^{-1}(x)}$. which specializes to  $Q(x) = x^{1-\frac{1}{\alpha}}$ for $P(x) = x^\alpha$.
Then we consider the following potential function 
\begin{equation}\label{defn:phi}
\Phi(t) = c \sum_{j\in A(t)} \frac{r_j(t)\left(w_j^A(t) - w_j^o(t)\right)^+}{P^{-1}(\min\{r_j(t), \bp\})Q(r_j(t))} ,
\end{equation}
where $w_j^A(t)$ ($w_j^o(t)$) is the remaining size of job $j$ with the algorithm ($\opt$) at time $t$, and $c$ is a constant to be chosen later.
\begin{remark}The potential function \eqref{defn:phi} is very similar to the one used in \cite{gupta2012scheduling} for a very different problem. The main novelty of our result is that we avoid resource augmentation unlike \cite{gupta2012scheduling}. Moreover, we would like to point out that for Problem \eqref{eq:flowtimeplusenergy}, the most popular potential functions used in  \cite{bansal2009speed, vaze2020multiple} cannot be used since they need job processing speed to be a function of $P^{-1}(n(t))$ which is not possible because of the sum-power constraint.
\end{remark}

We next show that the potential function $\Phi(t)$ satisfies the second boundary condition. The fact that the first boundary condition is satisfied is trivial.
\begin{lemma}\label{lem:jumplcfs}
Potential function $\Phi(t)$ \eqref{defn:phi} does not change on arrival of any new job. Moreover, on a departure of a job with the algorithm or the $\opt$, the potential function $\Phi(t)$ \eqref{defn:phi} does not increase.
\end{lemma}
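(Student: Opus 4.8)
The plan is to verify the two boundary-condition claims by treating the three relevant discrete events separately: the arrival of a new job, the completion (departure) of a job by the algorithm, and the completion of a job by $\opt$. In each case I would reduce the question to the behaviour of the per-job coefficient
$$ g(r) \bydef \frac{r}{P^{-1}(\min\{r,\bp\})\,Q(r)}, $$
so that the term contributed by job $j$ in \eqref{defn:phi} is $c\, g(r_j(t))\,(w_j^A(t)-w_j^o(t))^+$. The central observation, which I would establish first, is that $g$ is non-decreasing in $r$: since $P(x)=x^\alpha$ gives $P^{-1}(x)=x^{1/\alpha}$ and $Q(r)=r^{1-1/\alpha}$, one has $g(r)=1$ for $r\le \bp$ and $g(r)=(r/\bp)^{1/\alpha}$ for $r>\bp$. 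Thus $g$ is constant ($=1$) up to $\bp$ and increasing thereafter; in particular, lowering a job's rank can never increase its coefficient. This monotonicity is exactly what the denominator in \eqref{defn:phi} was engineered to provide, and it is the key step.

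For an arrival at time $t$, the newly arrived job $j'$ has identical remaining size under the algorithm and under $\opt$ (both equal to its original size), so $(w_{j'}^A(t)-w_{j'}^o(t))^+=0$ and $j'$ contributes nothing to $\Phi$. Moreover $j'$ arrives after every job already in $A(t)$, so it is counted in no earlier job's rank; hence every existing rank $r_j(t)$ is unchanged, and so is every existing term. Therefore $\Phi(t)$ does not change on arrival.

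For a departure by the algorithm, the completing job $j$ has $w_j^A(t)=0$, so its own term $c\,g(r_j(t))\,(0-w_j^o(t))^+ = 0$ vanishes and its removal from the sum costs nothing. Removing $j$ from $A(t)$ decreases by one the rank of every job that arrived after $j$ (those jobs previously counted $j$ among their predecessors) and leaves the ranks of earlier-arrived jobs untouched. By the monotonicity of $g$ and the nonnegativity of $(w_k^A(t)-w_k^o(t))^+$, each affected term can only decrease. Hence $\Phi(t)$ does not increase.

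Finally, a completion by $\opt$ only drives the corresponding $w_j^o(t)$ continuously to $0$; it induces no jump in $w_j^o$, does not alter $A(t)$ or any rank $r_j(t)$ (ranks are defined through the algorithm's outstanding set), and leaves the term $(w_j^A(t)-w_j^o(t))^+$ continuous across the event. Consequently there is no discrete change in $\Phi$ at an $\opt$ departure, and the growth of that term as $w_j^o$ decreases is charged instead to the drift analysis. The only genuinely delicate point is the monotonicity of $g$; once that is in hand, the three cases are immediate.
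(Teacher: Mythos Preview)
Your proposal is correct and follows essentially the same approach as the paper's proof: both hinge on the observation that the per-job coefficient $g(r)=r/\bigl(P^{-1}(\min\{r,\bp\})\,Q(r)\bigr)$ equals $1$ for $r\le\bp$ and $(r/\bp)^{1/\alpha}$ for $r>\bp$, hence is non-decreasing, so that rank reductions on an algorithm departure cannot increase $\Phi$; the arrival and $\opt$-departure cases are handled identically. Your explicit formulation of $g$ and its monotonicity is in fact a cleaner packaging of the case-by-case comparison the paper carries out.
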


We next bound the drift $d\Phi(t)/dt$ because of the processing by the $\opt$,  and the algorithm, respectively. To avoid cumbersome notation, we write $\beta n(t)$ instead of $\lceil \beta n(t)\rceil$ everywhere.

%

\begin{lemma}\label{lem:optphi}
The change in the potential function \eqref{defn:phi} because of the $\opt's$ contribution
 \begin{align} 
 d\Phi(t)/dt & \le \begin{cases} c P^o_{\text{sum}}(t) & \ \text{if} \ n(t) \le \bp, \\
 c n(t) \frac{Q(n_o(t))}{Q(n(t))} & \ \text{if} \ n(t) > \bp.
 \end{cases}
 \end{align}
\end{lemma}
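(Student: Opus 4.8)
The plan is to isolate the part of the drift driven by $\opt$'s service and reduce it to a weighted sum of $\opt$'s instantaneous speeds, then control that sum with the sum-power constraint. Since the lemma concerns only $\opt$'s contribution, I would freeze the algorithm's state and note that, among the summands of $\Phi(t)$ in \eqref{defn:phi}, only $(w_j^A(t)-w_j^o(t))^+$ moves, and it can grow at rate at most $s_j^o(t)$ (the speed $\opt$ puts on job $j$), and only when $w_j^A(t) > w_j^o(t)$. Hence
\[
 d\Phi(t)/dt \le c \sum_{j \in A(t)} \frac{r_j(t)\, s_j^o(t)}{P^{-1}(\min\{r_j(t),\bp\})\,Q(r_j(t))},
\]
the sum being effectively restricted to jobs on which $\opt$ is ahead. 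Substituting $P^{-1}(x)=x^{1/\alpha}$ and $Q(x)=x^{1-1/\alpha}$ collapses the weight to exactly $1$ when $r_j(t) \le \bp$ and to $(r_j(t)/\bp)^{1/\alpha}$ when $r_j(t) > \bp$; in particular the weight is nondecreasing in the rank.

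For the regime $n(t) > \bp$ I would bound every weight by the largest one. The top rank is $r_j(t) \le n(t)-1$, so every weight is at most $(n(t)/\bp)^{1/\alpha}$; pulling this factor out leaves $c\,(n(t)/\bp)^{1/\alpha}$ times $\opt$'s total speed over the jobs of $A(t)$ it is serving. As $\opt$ serves at most $n_o(t)$ of these, the second part of Lemma \ref{lem:optmaxspeedspecial} (with $Q$ increasing) bounds that total speed by $Q(n_o(t))\,P^{-1}(P^o_{\text{sum}}(t))$. Finally, using $P^o_{\text{sum}}(t) \le \bp$ gives $(n(t)/\bp)^{1/\alpha}\,(P^o_{\text{sum}}(t))^{1/\alpha} \le n(t)^{1/\alpha} = P^{-1}(n(t)) = n(t)/Q(n(t))$, so the whole expression is at most $c\,n(t)\,Q(n_o(t))/Q(n(t))$, exactly the claimed bound. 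The clean point here is that the $\bp$-truncation inside the weight and the constraint $P^o_{\text{sum}}(t)\le\bp$ are tuned precisely to cancel the rank-growth of the weights.

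For the regime $n(t) \le \bp$ every rank obeys $r_j(t) < n(t) \le \bp$, so all weights equal $1$ and the drift is simply $c$ times the total speed $\opt$ devotes to the (at most $n(t)$) jobs of $A(t)$ on which it is ahead. Here I would again invoke Lemma \ref{lem:optmaxspeedspecial} together with the sum-power constraint $\sum_j (s_j^o(t))^\alpha \le P^o_{\text{sum}}(t)$ to turn this aggregate speed into the energy term $c\,P^o_{\text{sum}}(t)$. I expect this last conversion to be the main obstacle: unlike the $n(t)>\bp$ case, where the weights supply exactly the factor that converts $P^{-1}(P^o_{\text{sum}})$ back into a clean power expression, here one must argue that the aggregate speed of the jobs $\opt$ serves cannot outrun its sum-power, which is where the interplay of the power-mean inequality with $n(t)\le\bp$ and $P^o_{\text{sum}}(t)\le\bp$ has to be handled carefully. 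Once both regimes are established, combining them with the algorithm's own drift (treated in a companion lemma) and the running condition \eqref{eq:runcond} yields the competitive bound.
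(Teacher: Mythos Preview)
Your approach is essentially the paper's: bound $\opt$'s contribution by a weighted sum of its per-job speeds, observe that the weight $\frac{r_j}{P^{-1}(\min\{r_j,\bp\})Q(r_j)}$ is nondecreasing in $r_j$, replace every weight by the maximal one (which the paper phrases as ``the maximum increase in $\Phi(t)$ is possible if the total speed of the $\opt$ \ldots is dedicated to the single job with the largest rank among $A(t)$, i.e., the job with rank equal to $n(t)$''), and then bound the aggregate speed via Lemma~\ref{lem:optmaxspeedspecial}. For $n(t)>\bp$ your computation and the paper's are line-for-line identical, including the use of $P^o_{\text{sum}}(t)\le\bp$ to collapse the factors into $c\,n(t)\,Q(n_o(t))/Q(n(t))$. (One cosmetic point: the paper takes the maximal rank to be $n(t)$ rather than $n(t)-1$, but this is immaterial.)

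Where you diverge is exactly the spot you flagged as the obstacle, and your instinct is correct: in the regime $n(t)\le\bp$ the paper does \emph{not} convert the aggregate speed into $P^o_{\text{sum}}(t)$. It simply bounds
\[
Q(\min\{n(t),n_o(t)\})\,P^{-1}(P^o_{\text{sum}}(t)) \;\le\; Q(n(t))\,P^{-1}(P^o_{\text{sum}}(t)) \;\le\; Q(\bp)\,P^{-1}(\bp) \;=\; \bp,
\]
using $n(t)\le\bp$ and $P^o_{\text{sum}}(t)\le\bp$, and concludes with $c\bp$. That is actually \emph{weaker} than the $c\,P^o_{\text{sum}}(t)$ stated in the lemma (since $P^o_{\text{sum}}(t)\le\bp$), so the discrepancy you anticipated is real and sits in the paper itself. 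Your proposed route via the power-mean inequality would not close it either: from $\sum_j (s_j^o)^\alpha \le P^o_{\text{sum}}(t)$ one cannot conclude $\sum_j s_j^o \le P^o_{\text{sum}}(t)$ in general. If you want a bound that feeds cleanly into the running condition, note that $Q(\min\{n(t),n_o(t)\})\,P^{-1}(P^o_{\text{sum}}(t)) \le n_o(t)^{1-1/\alpha}(P^o_{\text{sum}}(t))^{1/\alpha} \le (1-\tfrac{1}{\alpha})n_o(t)+\tfrac{1}{\alpha}P^o_{\text{sum}}(t)$ by weighted AM--GM, which gives $c\,(n_o(t)+P^o_{\text{sum}}(t))$; this is enough for the downstream theorem even though it is not literally what the lemma claims.
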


 \begin{lemma}\label{lem:algphi}
 For $\gamma < \beta$, when $n_o(t) \le \gamma n(t)$, the change in the potential function \eqref{defn:phi} because of the algorithm's contribution is
 \begin{align} 
 d\Phi(t)/dt & \le  -  \frac{(1-\beta) (\beta - \gamma) n(t) }{P^{-1}(\beta)}.
 \end{align}
\end{lemma}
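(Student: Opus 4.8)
The plan is to track the contribution to $d\Phi(t)/dt$ coming only from the algorithm's processing (the $\opt$ contribution being handled separately in Lemma \ref{lem:optphi}), and to show that enough of the jobs the algorithm works on actually \emph{decrease} the potential. First I would compute the per-job drift. The algorithm runs each of the $\beta n(t)$ most-recently-arrived jobs at speed $s(t) = P^{-1}\!\left(\frac{\min\{n(t),\bp\}}{\beta n(t)}\right)$, so $\frac{d}{dt}w_j^A(t) = -s(t)$ for these jobs. For a processed job $j$ with $\left(w_j^A(t)-w_j^o(t)\right)^+ > 0$, the algorithm's contribution to its term in $\Phi$ is
\begin{equation*}
-\,c\,s(t)\,\frac{r_j(t)}{P^{-1}(\min\{r_j(t),\bp\})\,Q(r_j(t))}
= -\,c\,s(t)\,\frac{P^{-1}(r_j(t))}{P^{-1}(\min\{r_j(t),\bp\})},
\end{equation*}
where the simplification uses $Q(r)=r/P^{-1}(r)$, so that $r_j/Q(r_j)=P^{-1}(r_j)$. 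Jobs with $\left(w_j^A-w_j^o\right)^+=0$ contribute nothing, and the jobs the algorithm is \emph{not} processing do not move. So the whole algorithm drift is the sum of the displayed negative terms over processed jobs with a strictly positive deficit.

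Next I would run the counting argument. Observe that $\left(w_j^A(t)-w_j^o(t)\right)^+=0$ forces $w_j^A(t)\le w_j^o(t)$; since any job in $A(t)$ has $w_j^A(t)>0$, this requires $w_j^o(t)>0$, i.e.\ $j$ is also outstanding for $\opt$. Hence at most $n_o(t)\le \gamma n(t)$ jobs have zero deficit. The algorithm processes the $\beta n(t)$ highest-rank jobs, so \emph{at least} $\beta n(t)-\gamma n(t)=(\beta-\gamma)n(t)$ of the processed jobs have a strictly positive deficit and thus contribute a negative term (this is where the hypothesis $\gamma<\beta$ is used).

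The crux — and the step I expect to be the main obstacle — is to lower bound the magnitude $s(t)\,\frac{P^{-1}(r_j(t))}{P^{-1}(\min\{r_j(t),\bp\})}$ of each such term uniformly by $\frac{1-\beta}{P^{-1}(\beta)}$, since it is the $\min\{\cdot,\bp\}$ truncations and the speed's dependence on $\min\{n(t),\bp\}$ that make this delicate. I would case-split using the fact that every processed job has rank $r_j(t)\ge(1-\beta)n(t)$. If $r_j(t)>\bp$ (which forces $n(t)>\bp$), the product telescopes to $\left(\frac{r_j(t)}{\beta n(t)}\right)^{1/\alpha}\ge\left(\frac{1-\beta}{\beta}\right)^{1/\alpha}$; if $r_j(t)\le\bp$ and $n(t)>\bp$, then $r_j(t)\ge(1-\beta)n(t)$ gives $\bp/n(t)\ge 1-\beta$, so $s(t)=\left(\frac{\bp}{\beta n(t)}\right)^{1/\alpha}\ge\frac{(1-\beta)^{1/\alpha}}{\beta^{1/\alpha}}$; and if $n(t)\le\bp$ the ratio is $1$ and $s(t)=\beta^{-1/\alpha}$. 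In every case the bound reduces to $\frac{(1-\beta)^{1/\alpha}}{P^{-1}(\beta)}$, and I finish using $(1-\beta)^{1/\alpha}\ge 1-\beta$, valid because $1-\beta\in(0,1)$ and $1/\alpha<1$.

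Combining the three steps, each of the at least $(\beta-\gamma)n(t)$ negative terms has magnitude at least $\frac{c(1-\beta)}{P^{-1}(\beta)}$, giving
\begin{equation*}
d\Phi(t)/dt \;\le\; -\,\frac{c\,(1-\beta)(\beta-\gamma)\,n(t)}{P^{-1}(\beta)},
\end{equation*}
which is the claimed bound (the constant $c$ from the potential \eqref{defn:phi} carrying through). The only genuinely technical work is the uniform per-job lower bound of the third paragraph; the drift computation and the counting are routine once the $Q$-simplification and the deficit characterization are in place.
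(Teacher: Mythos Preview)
Your proof is correct and follows essentially the same strategy as the paper: count at least $(\beta-\gamma)n(t)$ processed jobs with strictly positive deficit (since zero deficit forces the job to still be outstanding for $\opt$), note that every processed job has rank at least $(1-\beta)n(t)$, and then lower-bound each job's contribution. The only cosmetic difference is in the algebra: the paper keeps $r_j/Q(r_j)$ unsimplified, uses the single uniform inequality $\frac{P^{-1}(\min\{n(t),\bp\}/(\beta n(t)))}{P^{-1}(\min\{r_j(t),\bp\})}\ge \frac{1}{P^{-1}(\beta n(t))}$ together with $Q(r_j)\le Q(n(t))$, and arrives directly at the linear ratio $r_j(t)/n(t)\ge 1-\beta$, whereas you first simplify via $r_j/Q(r_j)=P^{-1}(r_j)$, case-split on the $\min\{\cdot,\bp\}$ truncations, obtain the slightly sharper per-job bound $(1-\beta)^{1/\alpha}/P^{-1}(\beta)$, and then weaken it with $(1-\beta)^{1/\alpha}\ge 1-\beta$ to match the statement.
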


\begin{proof}[Proof of Theorem \ref{thm:flowtimeplusenergyonline}]
  To prove the Theorem, we check the running condition \eqref{eq:runcond} for the two cases separately : i) $n_o(t) > \gamma n(t)$ and then ii) $n_o(t) \le \gamma n(t)$, and show that it holds for a constant $\kappa$.

Case i) $n_o(t) > \gamma n(t)$. In this case, we only count the $\opt's$ contribution to $d\Phi(t)/dt$, which is sufficient since the algorithm's contribution to $d\Phi(t)/dt$ is always non-positive. 
When $n(t) > \bp$, from Lemma \ref{lem:optphi}, we have that \eqref{eq:runcond}, $ n(t) + P_{\text{sum}}(t)+ d\Phi(t)/dt $
\begin{align}\nn
 & \le n(t) + \min\{n(t), \bp\} + c n(t) \frac{Q(n_o(t))}{Q(n(t))},   \\ \nn
  &  \stackrel{(a)}\le 2n(t) + c n(t) \frac{Q(b n(t))}{Q(n(t))} \le  2n(t) + c n(t) b^{1-1/\alpha}, \\  \label{eq:optcont1}
  &   \stackrel{(b)}\le  2n(t)  + c n(t) b =  2n(t)  + c n_o(t) \stackrel{(c)}\le  (2/\gamma+c ) n_o(t), \end{align}
where in $(a)$ we let $n_o(t) = b n(t)$. We first consider the case when $b>1$, where  inequality $(b)$ follows when $b>1$. 
Finally $(c)$ follows since $n_o(t) > \gamma n(t)$.
When $b<1$, then $\frac{Q(b n(t))}{Q(n(t))} < 1$. Thus, similar to \eqref{eq:optcont1}, for $b<1$, we get $n(t) + P_{\text{sum}}(t)+ d\Phi(t)/dt$
\begin{align}\nn
   & \le n(t) + \min\{n(t), \bp\} + c n(t) \frac{Q(n_o(t))}{Q(n(t))}, \\  \nn
    & \le  n(t)(2+c), \\ \label{eq:optcont2}
    & \le \frac{1}{\gamma}(2+c) n_o(t). 
\end{align}
When $n(t) \le \bp$, from Lemma \ref{lem:optphi}, we have that \eqref{eq:runcond}, $ n(t) + P_{\text{sum}}(t)+ d\Phi(t)/dt$
\begin{align} \nn
  & \le n(t) + \min\{n(t), \bp\}  + cP^o_{\text{sum}}(t), \\ \label{eq:optcont2a}
  & \le \frac{2+c}{\gamma}(n_o(t) + P^o_{\text{sum}}(t)). 
  \end{align}

Combining, \eqref{eq:optcont2} and \eqref{eq:optcont2a}, when $n_o(t) > \gamma n(t)$ 
\begin{align}\label{eq:optcont3}
  n(t) + P_{\text{sum}}(t)+ d\Phi(t)/dt, 
    & \le \frac{1}{\gamma}(2+c) (n_o(t) + P^o_{\text{sum}}(t)) .
\end{align}

Case ii) $n_o(t) \le \gamma n(t)$. 
Let $n_o(t)>0$ .
When $n(t) > \bp$, From Lemma \ref{lem:optphi} and Lemma \ref{lem:algphi}, \eqref{eq:runcond} can be bounded as  $n(t) + P_{\text{sum}}(t) + d\Phi(t)/dt   $
\begin{align}\nn
& \le n(t) + \min\{n(t),\bp\}  \\ 
& \quad +   c n(t) \frac{Q(n_o(t))}{Q(n(t))} -  c \frac{(1-\beta) (\beta - \gamma) n(t) }{P^{-1}(\beta)},\\\label{eq:c2}
  &  \stackrel{(a)}\le  n(t) \left(2+ c\left(\gamma^{1-1/\alpha} - \frac{(1-\beta) (\beta - \gamma) }{P^{-1}(\beta)}\right)\right) \stackrel{(b)} \le  0,
  \end{align}
  where $(a)$ follows since $n_o(t)\le \gamma n(t)$, while $(b)$ follows
 for choice of $\gamma, \beta, c$ that satisfy
   \begin{equation}\label{eq:condsecc}
 \frac{(1-\beta) (\beta - \gamma) }{P^{-1}(\beta)} > \gamma^{1-1/\alpha} \ \text{and} \  c \ge  \frac{-2}{\left(\gamma^{1-1/\alpha} - \frac{(1-\beta) (\beta - \gamma) }{P^{-1}(\beta)}\right)}.
\end{equation}
When $n_o(t)=0$, the $\opt$'s contribution is zero, and \eqref{eq:c2} holds with for a smaller value of $c$.
  
Similarly, when $n(t) \le \bp$, $n(t) + P_{\text{sum}}(t) + d\Phi(t)/dt  $
\begin{align}\nn
& \le n(t) + \min\{n(t),\bp\} +  c P^o_{\text{sum}}(t) -  c \frac{(1-\beta) (\beta - \gamma) n(t) }{P^{-1}(\beta)},\\\nn
  &  \le  2n(t) \left(1- c\left( \frac{(1-\beta) (\beta - \gamma) }{P^{-1}(\beta)}\right)\right)+ c P^o_{\text{sum}}(t), \\ \label{eq:c3}
  &\stackrel{(a)} \le cP^o_{\text{sum}}(t),
  \end{align}
where $(a)$ follows as long as $c\left( \frac{(1-\beta) (\beta - \gamma) }{P^{-1}(\beta)}\right) > 1$. 
Combining \eqref{eq:optcont3}, \eqref{eq:c2} and \eqref{eq:c3}, using \eqref{eq:runcond}, the competitive ratio of 
the proposed algorithm  is 
\begin{equation}\label{eq:finalbound}
\frac{2+c}{\gamma}
\end{equation} where $c, \beta$ and $\gamma$ satisfy \eqref{eq:condsecc}. 
Note that the bound \eqref{eq:finalbound} can be optimized by choosing the optimal value of $\beta$ and $\gamma$ satisfying \eqref{eq:condsecc}. 
It is easy to see that depending on $\alpha$, there exists a $\beta$ satisfying \eqref{eq:condsecc} with $\gamma = \beta^2$.  

For example, for $\alpha=2,3$, let $\beta = \frac{1}{6}$ and $\gamma = \beta^2$,  and we get a competitive ratio bound of $693$ and $680$, respectively, as follows. In fact for $2\le \alpha \le 3$, choosing $\beta = \frac{1}{6}$ and $\gamma = \beta^2$, the competitive ratio is at most $693$. 
For $\alpha =2$, let $\beta = \frac{1}{6}$, and $\gamma = \beta^2$. 
Then $\frac{(1-\beta) (\beta - \gamma) }{P^{-1}(\beta)} = 5/6(5/36)/\sqrt{1/6} = 2.45  5/6(5/36) =.283$ while $\gamma^{1-1/\alpha}= .167$. Thus, we have $\frac{(1-\beta) (\beta - \gamma) }{P^{-1}(\beta)} > \gamma^{1-1/\alpha}$ and 
$c =2/ (.283-.167) = 17.24$.
Thus, the competitive ratio when $\alpha=2$ is $\frac{2+c}{\gamma}= 36\times(2+17.24)< 693$.

Similarly for $\alpha=3$, with $\beta = \frac{1}{6}$, and $\gamma = \beta^2$, $c=16.66$ and the competitive ratio upper bound is $\frac{2+c}{\gamma}=36\times(18.66)<680$.
\end{proof}

\begin{proof}[Proof of Corollary \ref{cor:online}] Proof is immediate by noting that with Problem \eqref{eq:flowtime}, we do not have to add the energy consumption term $P_{\text{sum}}(t)$ for checking the running condition \eqref{eq:runcond}, thus resulting in a two-fold decrease in the $\kappa$ needed to satisfy \eqref{eq:runcond}.
\end{proof}

\section{Conclusions}
In this paper, we considered an important problem of flow time minimization in  data centers, where jobs have limited parallelizability, and 
they experience diminishing returns from being allocated additional servers. When all jobs are available at time $0$, a very simple algorithm called EQUI that processes all outstanding jobs at the same speed is shown to have a  constant competitive ratio that only depends on the speed-up exponent $\alpha$. For the most relevant speed-up exponents of $2$ and $3$, the competitive ratio is at most $3$. Thus, even without knowing job-sizes, and processing all of them at the same speed, EQUI is not too sub-optimal.

For the general online setting, where jobs arrive over time, we propose a LCFS type algorithm for scheduling and EQUI for speed selection, and show that its competitive ratio is a constant that only depends on the speedup exponent $\alpha$. Our result overcomes fundamental difficulty found in literature where similar results were shown only in the presence of resource augmentation, where an online algorithm is allowed more resources than the optimal offline algorithm. 
\bibliography{refs}

\begin{thebibliography}{10}

\bibitem{agrawal2016scheduling}
K.~Agrawal, J.~Li, K.~Lu, and B.~Moseley.
\newblock Scheduling parallelizable jobs online to minimize the maximum flow
  time.
\newblock In {\em Proceedings of the 28th ACM Symposium on Parallelism in
  Algorithms and Architectures}, pages 195--205, 2016.

\bibitem{bansal2009speed}
N.~Bansal, H.-L. Chan, and K.~Pruhs.
\newblock Speed scaling with an arbitrary power function.
\newblock In {\em Proceedings of the twentieth annual ACM-SIAM symposium on
  discrete algorithms}, pages 693--701. SIAM, 2009.

\bibitem{berg2017towards}
B.~Berg, J.-P. Dorsman, and M.~Harchol-Balter.
\newblock Towards optimality in parallel scheduling.
\newblock {\em Proceedings of the ACM on Measurement and Analysis of Computing
  Systems}, 1(2):1--30, 2017.

\bibitem{berg2019hesrpt}
B.~Berg, R.~Vesilo, and M.~Harchol{-}Balter.
\newblock {heSRPT}: Optimal scheduling of parallel jobs with known sizes.
\newblock {\em {SIGMETRICS} Perform. Evaluation Rev.}, 47(2):18--20, 2019.

\bibitem{condie2010mapreduce}
T.~Condie, N.~Conway, P.~Alvaro, J.~M. Hellerstein, K.~Elmeleegy, and R.~Sears.
\newblock Mapreduce online.
\newblock In {\em Nsdi}, volume~10, page~20, 2010.

\bibitem{devanur2017primal}
N.~R. Devanur and Z.~Huang.
\newblock Primal dual gives almost optimal energy-efficient online algorithms.
\newblock {\em ACM Transactions on Algorithms (TALG)}, 14(1):1--30, 2017.

\bibitem{edmonds2000scheduling}
J.~Edmonds.
\newblock Scheduling in the dark.
\newblock {\em Theoretical Computer Science}, 235(1):109--141, 2000.

\bibitem{edmonds2009scalably}
J.~Edmonds and K.~Pruhs.
\newblock Scalably scheduling processes with arbitrary speedup curves.
\newblock In {\em Proceedings of the twentieth annual ACM-SIAM symposium on
  Discrete algorithms}, pages 685--692. SIAM, 2009.

\bibitem{gupta2012scheduling}
A.~Gupta, S.~Im, R.~Krishnaswamy, B.~Moseley, and K.~Pruhs.
\newblock Scheduling heterogeneous processors isn't as easy as you think.
\newblock In {\em Proceedings of the twenty-third annual ACM-SIAM symposium on
  Discrete algorithms}, pages 1242--1253. SIAM, 2012.

\bibitem{harchol2021open}
M.~Harchol-Balter.
\newblock Open problems in queueing theory inspired by datacenter computing.
\newblock {\em Queueing Systems}, 97(1):3--37, 2021.

\bibitem{im2016competitively}
S.~Im, B.~Moseley, K.~Pruhs, and E.~Torng.
\newblock Competitively scheduling tasks with intermediate parallelizability.
\newblock {\em ACM Transactions on Parallel Computing (TOPC)}, 3(1):1--19,
  2016.

\bibitem{isard2007dryad}
M.~Isard, M.~Budiu, Y.~Yu, A.~Birrell, and D.~Fetterly.
\newblock Dryad: distributed data-parallel programs from sequential building
  blocks.
\newblock In {\em ACM SIGOPS operating systems review}, volume~41, pages
  59--72. ACM, 2007.

\bibitem{lin2018model}
S.-H. Lin, M.~Paolieri, C.-F. Chou, and L.~Golubchik.
\newblock A model-based approach to streamlining distributed training for
  asynchronous sgd.
\newblock In {\em 2018 IEEE 26th International Symposium on Modeling, Analysis,
  and Simulation of Computer and Telecommunication Systems (MASCOTS)}, pages
  306--318. IEEE, 2018.

\bibitem{ren2016clairvoyant}
R.~Ren and X.~Tang.
\newblock Clairvoyant dynamic bin packing for job scheduling with minimum
  server usage time.
\newblock In {\em Proceedings of the 28th ACM Symposium on Parallelism in
  Algorithms and Architectures}, pages 227--237, 2016.

\bibitem{rossbach2013dandelion}
C.~J. Rossbach, Y.~Yu, J.~Currey, J.-P. Martin, and D.~Fetterly.
\newblock Dandelion: a compiler and runtime for heterogeneous systems.
\newblock In {\em Proceedings of the Twenty-Fourth ACM Symposium on Operating
  Systems Principles}, pages 49--68. ACM, 2013.

\bibitem{vaze2020multiple}
R.~Vaze and J.~Nair.
\newblock Multiple server {SRPT} with speed scaling is competitive.
\newblock {\em IEEE/ACM Transactions on Networking}, 28(4):1739--1751, 2020.

\bibitem{vaze2020network}
R.~Vaze and J.~Nair.
\newblock Network speed scaling.
\newblock {\em Performance Evaluation}, 144:102145, 2020.

\bibitem{verma2015large}
A.~Verma, L.~Pedrosa, M.~Korupolu, D.~Oppenheimer, E.~Tune, and J.~Wilkes.
\newblock Large-scale cluster management at google with borg.
\newblock In {\em Proceedings of the Tenth European Conference on Computer
  Systems}, pages 1--17, 2015.

\end{thebibliography}
  \appendix
\section{}
\begin{proof}[Proof of Lemma \ref{lem:jumpequi}]
 On a departure of a job with the algorithm or the $\opt$, $n^i(t,q)$ or $n_o^i(t,q)$ changes for only $q=0$, and since there is an integral outside,  $\int_{0}^\infty ( n^i(t,q) - n_o^i(t,q))^+ dq$ remains the same on a departure of a job with either the algorithm or the $\opt$. 

The pre-factor term $P^{-1}\left(\frac{n(t)}{\min\{n(t), \bp\}}\right)$ changes though, when  there is a departure of a job with the algorithm, on account of $n(t) \rightarrow n(t)-1$. Consider time $t^-$, just before a departure at time $t$, where $n(t) =n(t^-)-1$. 
 
 Case Ia: $\min\{n(t^-), \bp\} = \bp$ and $\min\{n(t), \bp\} = \bp$. In this case,  $P^{-1}\left(\frac{n(t^-)}{\min\{n(t^-), \bp\}}\right) - P^{-1}\left(\frac{n(t)}{\min\{n(t), \bp\}}\right) > 0$. 
 
 Case Ib: $\min\{n(t^-), \bp\} = \bp$ and $\min\{n(t), \bp\} = n(t)$. In this case,  $P^{-1}\left(\frac{n(t^-)}{\min\{n(t^-), \bp\}}\right) - P^{-1}(1) \ge 0$
 
 Case II: $\min\{n(t^-), \bp\} = n(t^-)$ In this case, $$P^{-1}\left(\frac{n(t^-)}{\min\{n(t^-), \bp\}}\right) - P^{-1}\left(\frac{n(t)}{\min\{n(t), \bp\}}\right) = 0.$$
 Thus, the pre-factor $P^{-1}\left(\frac{n(t)}{\min\{n(t), \bp\}}\right)$ does not increase at any job departure. 
 Moreover, the departure of any job with the $\opt$ does not change the pre-factor.
 Since the integral is always non-negative, the assertion of the Lemma follows.
\end{proof}

\section{}
\begin{proof}[Proof of Lemma \ref{lem:jumplcfs}]
 On an arrival of a new job $j$, the ranks of all the existing jobs do not change, while for the newly arrived job $j$, $w_j^A(t) - w_j^o(t)=0$.
  Hence the potential function $\Phi(t)$ \eqref{defn:phi} does not change on arrival of any new job. 
  
 On a departure of a job with the algorithm, rank of any remaining job can only decrease, in particular by $1$. 
 Thus, if at time $t$ when job $k$ departs with the algorithm, job $j$'s ($j\in A(t^{+})$) rank at time  $t^{+}$, is either $r_j(t^+) =  r_j(t)$ or 
 $r_j(t^+) =  r_j(t)-1$. In the first case, there is no change in the potential function. 
 
 In the second case, when $r_j(t^+) =  r_j(t)-1$,
 first consider the case that $r_j(t) < \bp$ which implies that $r_j(t^+) < \bp$. 
 In this subcase $$\frac{r_j(t^+)}{P^{-1}(\min\{r_j(t^+), \bp\})Q(r_j(t^+))} = \frac{r_j(t)}{P^{-1}(\min\{r_j, \bp\})Q(r_j(t))} =1.$$ 
 Thus, there is no change in $\Phi(t)$ in this subcase. 
 
 If instead $r_j(t) \ge \bp$ but $r_j(t^+) = r_j(t)- 1< \bp$ , then 
 $$\frac{r_j(t^+)}{P^{-1}(\min\{r_j(t^+), \bp\})Q(r_j(t^+))} =1,$$ while $\frac{r_j(t)}{P^{-1}(\min\{r_j, \bp\})Q(r_j(t))} =P^{-1}(r_j(t)/\bp) \ge 1$, since $\bp \le r_j(t)$. Thus, in this case also, the potential function $\Phi$ does not increase on departure of a job with the algorithm. Similar conclusion follows if $r_j(t) > \bp$ and $r_j(t^+) > \bp$.
 
Since there is no discontinuity when  a job departs with the $\opt$, hence $\Phi(t)$ does not change when   a job departs with the $\opt$.
\end{proof}

\begin{proof}[Proof of Lemma \ref{lem:optphi}]
From the definition of $\Phi(t)$ \eqref{defn:phi},  $\opt$ can increase $\Phi(t)$ at time $t$ only if it processes jobs that also belong to the set $A(t)$ (outstanding jobs with the algorithm).

From Lemma \ref{lem:optmaxspeedspecial}, with $P^o_{\text{sum}}(t)$ being the total power used by $\opt$ at time $t$, the sum of the speeds devoted to the  set of $A(t)$ jobs of the algorithms with $n(t) = |A(t)|$ by the $\opt$ is at most  
$Q(n(t)) P^{-1}(P^o_{\text{sum}}(t))$. Moreover, since $\opt$ contains only $n_o(t)$ jobs, sum of the speeds devoted to the 
$n(t)$ jobs of the algorithm is at most $$Q(\min\{n(t), n_o(t)\}) P^{-1}(P^o_{\text{sum}}(t)).$$ 

Moreover, from the definition 
of  $\Phi(t)$ \eqref{defn:phi}, the maximum increase in $\Phi(t)$ is possible if the total speed of the $\opt$ that it can dedicate to jobs belonging to $A(t)$ is dedicated to the single job with the largest rank among $A(t)$, i.e., the job with rank equal to $n(t)$. Thus, because of processing by the $\opt$, $  d\Phi(t)/dt  $
\begin{align}\nn
&\le  c  \frac{n(t)}{P^{-1}(\min\{n(t), \bp\}) Q(n(t))} \\ \nn
& \quad  \times Q(\min\{n(t), n_o(t)\}) P^{-1}(P^o_{\text{sum}}(t)).
   \end{align}
   If $n(t) \le \bp$, then 
   \begin{align}\nn
  d\Phi(t)/dt & \le  c Q(n(t))P^{-1}(P^o_{\text{sum}}(t)), \\ \nn
  & \le c Q(\bp)P^{-1}(P^o_{\text{sum}}(t)),\\ \nn
  & \le c Q(\bp)P^{-1}(\bp),\\
  & =c \bp.
   \end{align}
    Otherwise, if $n(t) > \bp$, then 
   \begin{align}\label{}
  d\Phi(t)/dt & \le  c n(t) \frac{Q(n_o(t))}{Q(n(t))},
   \end{align}
   since $P^o_{\text{sum}}(t)\le \bp$.
   
   \end{proof}
   \section{}
\begin{proof}[Proof of Lemma \ref{lem:driftphis}]
Note that for at least $\max\{n(t)-n_o(t),0\}$ jobs belonging to $A(t)$, the corresponding terms $( n^i(t,q) - n_o^i(t,q))^+ > 0$ in $\Phi_{sf}(t)$. Thus, algorithm EQUI is decreasing work at speed $s_i(t)$ for at least $\max\{n(t)-n_o(t),0\}$ jobs. Hence, the drift $d \Phi_{sf}(t)/dt$ with respect to processing by the algorithm EQUI is $d \Phi_{sf}(t)/dt $
\begin{align}\nn
& \stackrel{(a)}= c_1P^{-1}\left(\frac{n(t)}{\min\{n(t), \bp\}}\right)\left(-\max\{n(t)-n_o(t),0\} s_i(t)  \right),\\   \label{eq:phiboundsf1}
  &\stackrel{(b)} = c_1 \left(-\max\{n(t)-n_o(t),0\}  \right)
\end{align}
where $(a)$ follows since for at least $\max\{n(t)-n_o(t),0\}$ jobs, the EQUI algorithm is decreasing work at speed $s_i(t)$, while $(b)$ follows since 
$s_i(t)=P^{-1}\left(\frac{\min\{n(t), \bp\}}{n(t)}\right)$ for all jobs $i$ being processed by the EQUI algorithm. 

From Lemma \ref{lem:optmaxspeedspecial}, we know that the sum of the speeds used by the $\opt$ over its $n_o(t)$ jobs is at most 
\begin{equation}\label{eq:dummy400}
\sum_{i=1}^{n_o(t)} s^o_i(t) \le Q(n_o(t)) P^{-1}(P^o_{\text{sum}}(t)).
\end{equation}

Using this, we bound the $d \Phi_{sf}(t)/dt$ with respect to processing by the $\opt$ as follows
\begin{align} \nn
d \Phi_{sf}(t)/dt &\stackrel{(a)} \le c_1P^{-1}\left(\frac{n(t)}{\min\{n(t), \bp\}}\right)\left(\sum_{i=1}^{n_o(t)} s^o_i(t)\right),\\    \nn
  &\stackrel{(b)} \le c_1P^{-1}\left(\frac{n(t)}{\min\{n(t), \bp\}}\right)Q(n_o(t)) P^{-1}(P^o_{\text{sum}}(t)),   \\ \nn
  & \stackrel{(c)}\le   c_1 P^{-1}(n(t))Q(n_o(t)), \\   \nn
  & =   c_1 n(t)^{1/\alpha} n_o(t) ^{1-1/\alpha}, \\    \label{eq:phiboundsf2}
  & \stackrel{(d)}\le  c_1 \left(\frac{1}{\alpha}\right) n(t) + c_1\left(1-\frac{1}{\alpha}\right) n_o(t). 
\end{align}
where for $(a)$ we assume that all the $n_o$ jobs of the $\opt$ are getting processed at non-zero speed (best case in terms of increasing $d \Phi_{sf}(t)/dt$), while $(b)$ follows from \eqref{eq:dummy400}, $(c)$ holds when $\min\{n(t), \bp\} = \bp$ and since $P^o_{\text{sum}}(t)\le \bp$, and 
$(d)$ 
follows from the generalized AM-GM inequality. \footnote{For $a_i\ge 0$ and $\lambda_i\ge 0$ with $\sum_{i=1}^n \lambda_i=1$, then 
$\prod_{i=1}^n c_i^{\lambda_i} \le \sum_{i=1}^n \lambda_i c_i$.} 

For the case, when $\min\{n(t), \bp\} = n(t)$, following similar steps to reach \eqref{eq:phiboundsf2}, we get 
\begin{align} \nn
d \Phi_{sf}(t)/dt  & \le c_1 P^{-1}(P^o_{\text{sum}}(t))Q(n_o(t)), \\\label{eq:phiboundsf3}
&  \le  
c_1 \left(\frac{1}{\alpha}\right) P^o_{\text{sum}}(t) + c_1\left(1-\frac{1}{\alpha}\right) n_o(t). 
\end{align}

Combining \eqref{eq:phiboundsf1}, \eqref{eq:phiboundsf2}, and \eqref{eq:phiboundsf3}, the proof is complete.
\end{proof}

\section{}
\begin{proof}[Proof of Lemma \ref{lem:algphi}]
  Since the algorithm executes the $\beta n(t)$ jobs that have arrived most recently, the rank of job $i$ that is being processed by the algorithm is 
  $r_i(t) = n(t) - i+1$ for $i=1, \dots, \beta n(t)$. Since $n_o(t) \le \gamma n(t)$, and $\gamma < \beta$,  $$w_j^A(t) - w_j^o(t) > 0$$ for at  least 
  $(\beta - \gamma) n(t)$ jobs with the  algorithm. In the worst case, the ranks of these $(\beta - \gamma) n(t)$ jobs are 
  $(1-\beta) n(t) + i-1$ for $i=1, \dots, (\beta-\gamma) n(t)$.
  
  Since the speed for any of the job executed by the algorithm is $s(t) = P^{-1}\left(\frac{\min\{n(t), \bp\}}{\beta n(t)}\right)$. Thus, the change in the potential function because of the algorithm's  $ d\Phi(t)/dt $
  processing is 
  \begin{align*}\label{}
\le  &  - c\sum_{i= (1-\beta) n(t)}^{(1-\beta) n(t) + (\beta - \gamma) n(t)} \frac{r_i(t)}{P^{-1}(\min\{r_i(t), \bp\})Q(r_i(t))} \\ \nn
& \quad P^{-1}\left(\frac{\min\{n(t), \bp\}}{\beta n(t)}\right),\\
  \stackrel{(a)} \le &   - c\sum_{i= (1-\beta) n(t)}^{(1-\beta) n(t) + (\beta - \gamma) n(t)} \frac{r_i(t)}{Q(n(t))}  \frac{1}{ P^{-1}(\beta n(t))}, \\
  = &   - c\sum_{i= (1-\beta) n(t)}^{(1-\beta) n(t) + (\beta - \gamma) n(t)} \frac{r_i(t)}{Q(n(t))}  \frac{1}{ P^{-1}(n(t))} \frac{1}{P^{-1}(\beta)} , \\
  \stackrel{(b)} = &   - c\sum_{i= (1-\beta) n(t)}^{(1-\beta) n(t) + (\beta - \gamma) n(t)}  \frac{r_i(t)}{n(t)} \frac{1}{P^{-1}(\beta)}, \\
 \stackrel{(c)}\le &  - c \frac{(1-\beta) (\beta - \gamma) n(t) n(t)}{\beta } \frac{1}{n(t) P^{-1}(\beta)} , \\
  = & - c \frac{(1-\beta) (\beta - \gamma) n(t) }{P^{-1}(\beta)} , 
\end{align*}
where $(a)$ follows since $r_i(t) \le n(t)$ and $$\frac{P^{-1}\left(\frac{\min\{n(t), \bp\}}{\beta n(t)}\right)}{P^{-1}(\min\{r_i(t), \bp\})} \ge \frac{1}{ P^{-1}(\beta n(t))},$$ while $(b)$  follows since $Q(x) P^{-1}(x) = x$, and finally $(c)$ follows since there are $(\beta-\gamma) n(t)$ jobs that are being executed each with rank at least $(1-\beta) n(t)$.
\end{proof}

%
%
%
%
\section{Numerical results}\label{sec:sim}
In this section, we present simulation results for the mean flow time (per job). We first consider the special case when all jobs are available at time $0$ and compare the performance of the optimal algorithm, heSRPT, and the EQUI, in Fig. \ref{fig:offline} for different values of $\alpha$.  We use the number of servers $N=1000$, and consider $1000$ jobs with size that is exponentially distributed with mean $20$. For all the results, we compare the two algorithms for the same realization of random variables, and then average it out. As predicted by our results, the performance of EQUI improves compared to heSRPT as $\alpha$ increases. This has also been observed in \cite{berg2019hesrpt}.

Next, for the online setting, we compare the performance of the  proposed algorithm with other known algorithms such as heSRPT and EQUI. 
For all the plots in the online setting, we use the number of servers $N=1000$, and  consider a slotted time where in each slot, the number of jobs arriving is Poisson distributed with mean $20$. For each job, its size is exponentially distributed with mean $20$. For each iteration, we generate jobs for $1000$ slots, and count its flow time, and iterate over 1000 iterations. We plot the results for $\alpha=2, 2.5,$ and $3$. For simplicity of illustration in all the figures we denote the proposed algorithm \textsc{Fractional-LCFS-EQUI} as just Alg with appropriate choice of $\beta$. 
To implement algorithm heSRPT in the online setting, we apply \eqref{def:hesrpt} at each time $t$. For all the results, we compare the performance of different algorithms for the same realization of random variables, and then average it out. 

From all the plots, we can observe that the performance of algorithm \textsc{Fractional-LCFS-EQUI} degrades with decreasing value of $\beta$, but for analytical results sufficiently small value of $\beta$ is needed, for example $\beta=1/6$ in Theorem \ref{thm:flowtimeplusenergyonline} for $\alpha =2$. 
Moreover, we also see that the heSRPT algorithm outperforms algorithm EQUI as well as the proposed algorithm for all choices of $\beta$, however, our analytical guarantees on the competitive ratio are for the proposed algorithm, and upper bounding the competitive ratio of the heSRPT algorithm in the online case is challenging, and is an object of ongoing work.

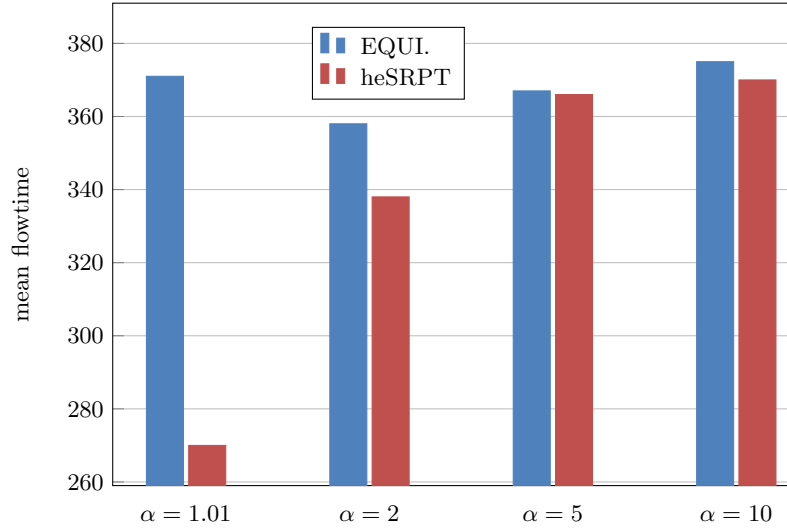
\begin{figure}
\centering
\begin{tikzpicture}
    \begin{axis}[
        width  = 0.75*\textwidth,
        height = 8cm,
        major x tick style = transparent,
        ybar,
        bar width=14pt,
        ymajorgrids = true,
        ylabel = {mean flowtime},
        symbolic x coords={$\alpha=1.01$, $\alpha=2$, $\alpha=5$, $\alpha=10$,$\alpha=20$},
        xtick = data,
        scaled y ticks = false,
        legend cell align=left,
        legend style={
                at={(.4,.8)},
                anchor=south east,
                column sep=1ex}
    ]
        \addplot[style={bblue,fill=bblue,mark=none}]
            coordinates {($\alpha=1.01$, 371) ($\alpha=2$,358) ($\alpha=5$,367) ($\alpha=10$,375) ($\alpha=20$,380)};

        \addplot[style={rred,fill=rred,mark=none}]
            coordinates {($\alpha=1.01$, 270) ($\alpha=2$, 338) ($\alpha=5$,366) ($\alpha=10$, 370) ($\alpha=20$,375) };

%

        \legend{EQUI., heSRPT}
    \end{axis}
\end{tikzpicture}
\caption{Comparison of mean flow time with EQUI and heSRPT.}
\label{fig:offline}
\end{figure}

\begin{figure}
\centering
\begin{tikzpicture}
\begin{axis}[ylabel= mean flowtime,
    symbolic x coords={heSRPT, EQUI, Alg $\beta=\frac{1}{2}$, Alg $\beta=\frac{1}{4}$, Alg $\beta=\frac{1}{6}$},
    xtick=data]
    \addplot[ybar,fill=blue] coordinates {
        (heSRPT,475)
        (EQUI,505)
        (Alg $\beta=\frac{1}{2}$,730)
         (Alg $\beta=\frac{1}{4}$,1043)
          (Alg $\beta=\frac{1}{6}$,1287)
    };
\end{axis}
\end{tikzpicture}
\label{fig:sim2}
\caption{Comparison of flow time for different algorithms with $\alpha=2$.}
\end{figure}

\begin{figure}
\centering
\begin{tikzpicture}
\begin{axis}[ylabel= mean flowtime,
    symbolic x coords={heSRPT, EQUI, Alg $\beta=\frac{1}{2}$, Alg $\beta=\frac{1}{4}$, Alg $\beta=\frac{1}{6}$},
    xtick=data]
    \addplot[ybar,fill=blue] coordinates {
        (heSRPT,462.2)
        (EQUI,476.28)
        (Alg $\beta=\frac{1}{2}$,737)
         (Alg $\beta=\frac{1}{4}$,1127)
          (Alg $\beta=\frac{1}{6}$,1445)
    };
\end{axis}
\end{tikzpicture}
\label{fig:sim25}
\caption{Comparison of flow time for different algorithms with $\alpha=2.5$.}
\end{figure}

\begin{figure}
\centering
\begin{tikzpicture}
\begin{axis}[ylabel= mean flowtime, 
    symbolic x coords={heSRPT, EQUI, Alg $\beta=\frac{1}{2}$, Alg $\beta=\frac{1}{4}$, Alg $\beta=\frac{1}{6}$},
    xtick=data]
    \addplot[ybar,fill=blue] coordinates {
        (heSRPT,451)
        (EQUI,460)
        (Alg $\beta=\frac{1}{2}$,742)
         (Alg $\beta=\frac{1}{4}$,1184)
          (Alg $\beta=\frac{1}{6}$,1562)
    };
\end{axis}
\end{tikzpicture}
\label{fig:sim3}
\caption{Comparison of flow time for different algorithms with $\alpha=3$.}

\end{figure}
\end{document}